\definecolor{shadecolor}{rgb}{0.9,0.9,0.9}
\newcommand{\COMMENTED}[1]{}
\newcommand{\cancel}[1]{}
\newcommand{\stefan}[1]{}%{\textcolor{green}{#1}} %% sorry stefan deadline requirements
\newcommand{\Bf}{\textsc{Bf}}
\newcommand{\Af}{\textsc{Af}}
\newcommand{\ALG}{\textsc{Alg}}
\newcommand{\TREE}{\textsc{Tree}}
\newcommand{\FRAME}{\textsc{Dict}}
\newcommand{\pirev}[1]{\ensuremath{\pi^{-1}(#1)}}
\begin{document}

\title{Request Complexity of VNet Topology Extraction:\\Dictionary-Based Attacks\thanks{This project was partly funded by the Secured Virtual Cloud (SVC) project.}}

\author {
   Yvonne-Anne Pignolet$^1$, Stefan Schmid$^2$, Gilles Tredan$^3$\\
   \small $^1$ ABB Corporate Research, Switzerland\\
   %yvonne-anne.pignolet@ch.abb.com\\
   \small $^2$ Telekom Innovation Laboratories \& TU Berlin, Germany\\
   % stefan@net.t-labs.tu-berlin.de\\
   \small $^3$ CNRS-LAAS, France\ }

\institute{}

\date{}

\maketitle \thispagestyle{empty}

\sloppy

\begin{abstract}
The network virtualization paradigm envisions an Internet where arbitrary
virtual networks (VNets) can be specified and embedded over a shared substrate
(e.g., the physical infrastructure). As VNets can be requested at short notice
and for a desired time period only, the paradigm enables a flexible service
deployment and an efficient resource utilization.

This paper investigates the security implications of such an architecture. We
consider a simple model where an attacker seeks to extract secret
information about the substrate topology, by issuing repeated VNet embedding
requests. We present a general framework that exploits basic properties of the VNet embedding relation to infer the entire topology. Our framework is based on a graph motif dictionary applicable for various graph classes. We provide upper bounds on the \emph{request complexity}, the number of requests needed by the attacker to succeed. Moreover, we present some experiments on existing networks to evaluate this dictionary-based approach.
\end{abstract}

\section{Introduction}

While network virtualization enables a flexible resource sharing, opening the infrastructure
for automated virtual network (VNet) embeddings or service deployments may introduce new
kinds of security threats. For example, by virtualizing its network infrastructure (e.g.,
the links in the aggregation or backbone network, or the computational or storage resources
at the points-of-presence), an Internet Service Provider (ISP) may lose control over how its network
is used. Even if the ISP manages the allocation and migration of VNet slices and services
itself and only provides a very rudimentary interface to interact with customers (e.g., service or
content providers), an attacker may infer information about the network topology (and state) by generating VNet
requests.

This paper builds upon the model introduced in~\cite{minicom13} and studies complexity of the
\emph{topology extraction problem}: How many VNet requests are required to infer the full
topology of the infrastructure network? While algorithms for trees and cactus graphs with request complexity $O(n)$ and a lower bound for general graphs of $\Omega(n^2)$ have been shown in~\cite{minicom13}, graph classes between these extremes have not been studied.

\textbf{Contribution.}
This paper presents a general framework to solve the topology extraction problem.
We first describe necessary and sufficient conditions which facilitate the ``greedy''
exploration of the substrate topology (the \emph{host graph} $H$) by iteratively extending the
requested VNet graph (the \emph{guest graph} $G$). Our framework then exploits these conditions to
construct an ordered (request) \emph{dictionary} defined over so-called \emph{graph motifs}.
We show how to apply the framework to different graph families, discuss the implications on the request complexity,
and also report on a small simulation study on realistic topologies. These empirical results show that many scenarios
can indeed be captured with a small dictionary, and small motifs are sufficient to infer if not the entire, then
 at least a significant fraction of the topology.

%\textbf{Organization.}
%The remainder of this paper is organized as follows.
%Section~\ref{sec:model} introduces our formal model. Section~\ref{sec:framework-app}
%describes the main contribution, the algorithmic framework, together with an
%analysis and examples. Real-world networks are analyzed in Section~\ref{sec:experiments} followed by a review of related work in Section~\ref{sec:relwork}.
%The paper concludes in Section~\ref{sec:conclusion}.

\section{Background}\label{sec:model}

This section presents our model and discusses how it compares to related work.

\textbf{Model.}
The VNet embedding based topology extraction problem has been introduced
in~\cite{minicom13}. The formal setting consists of two entities: a
\emph{customer} (the ``adversary'') that issues virtual network (VNet)
requests and a \emph{provider} that performs the access control and the
embedding of VNets. We model the virtual network requests as simple,
undirected graphs $G=(V,E)$ (the \emph{guest graph}) where $V$ denotes the
virtual nodes and $E$ denotes the virtual edges connecting nodes in $V$.
Similarly, the infrastructure network is given as an undirected graph
$H=(V,E)$ (the so-called \emph{host graph} or \emph{substrate}) as well, where
$V$ denotes the set of substrate nodes, $E$ is the set of substrate links, and
$w$ is a capacity function describing the available resources on a given node
or edge. Without loss of generality, we assume that $H$ is connected and that
there are no parallel edges or self-loops neither in VNet requests nor in the
substrate.

In this paper we assume that besides the resource demands, the VNet
requests do not impose any mapping restrictions, i.e., a virtual
node can be mapped to \emph{any} substrate node, and we assume that a virtual link
connecting two substrate nodes can be mapped to an entire (but single)
\emph{path} on the substrate as long as the demanded capacity is available. These assumptions are typical
for virtual networks~\cite{virsurvey}.

A virtual link which is mapped to more than one substrate link
however can entail certain costs at the \emph{relay nodes}, the substrate
nodes which do not constitute endpoints of the virtual link and merely serve
for forwarding.
We model these costs with a parameter $\epsilon> 0$ (per link). Moreover, we also allow multiple virtual nodes to be mapped to the same
substrate node if the node capacity allows it; we assume that if two virtual nodes are mapped
to the same substrate node, the cost of a virtual link between them is zero.
%The VNet embedding problem is defined as follows:
\begin{definition}[Embedding $\pi$, Relation $\mapsto$]\label{def:embedding}
An \emph{embedding} of a graph $A=(V_A,E_A,w_A)$ to a graph $B=(V_B,E_B,w_B)$ is a mapping
$\pi: A \to B$ where every node of $A$ is mapped to exactly one node of $B$, and every edge of $A$ is mapped to
a path of $B$. That is, $\pi$ consists of a node $\pi_V : V_A \to V_B$ and
an edge mapping $\pi_E : E_A \to P_B$, where $P_B$ denotes the set of paths.
We will refer to the set of virtual nodes embedded on a node $v_B\in V_B$ by
$\pi^{-1}_V(v_B)$; similarly, $\pi^{-1}_E(e_B)$ describes the set of virtual
links passing through $e_B\in E_B$ and $\pi^{-1}_E(v_B)$ describes the virtual links passing through $v_B\in V_B$ with $v_B$ serving as a relay
node.

To be valid, the embedding $\pi$ has to fulfill the following properties: ($i$)
Each node $v_A\in V_A$ is mapped to exactly one node $v_B\in V_B$ (but given
sufficient capacities, $v_B$ can host multiple nodes from $V_A$). ($ii$) Links
are mapped consistently, i.e., for two nodes $v_A,v_A'\in V_A$, if
$e_A=\{v_A,v_A' \}\in E_A$ then $e_A$ is mapped to a single (possibly empty
and undirected) path in $B$ connecting nodes $\pi(v_A)$ and $\pi(v_A')$. A link $e_A$ cannot be split into multiple paths. ($iii$) The capacities
of substrate nodes are not exceeded: $\forall v_B \in V_B$: $\sum_{u\in
\pi_V^{-1}(v_B)} w(u) +\epsilon \cdot |\pi_E^{-1}(v_B)| \leq w(v_B)$.
($iv$) The capacities in $E_B$ are respected as well, i.e., $\forall e_B \in E_B$: $\sum_{e\in \pi^{-1}_E(e_B)} w(e) \leq w(e_B)$.

If there exists such a valid embedding mapping $\pi$, we say that graph $A$ can be embedded in $B$,
denoted by $A\mapsto B$. Hence, $\mapsto$ denotes the VNet \emph{embedding relation}.
\end{definition}

The provider has a flexible choice where to embed a VNet as long as a valid mapping is
chosen. In order to design topology discovery algorithms, we exploit the following property of the embedding relation.
\begin{lemma}\label{lemma:poset}
The embedding relation $\mapsto$ applied to any family $\mathcal{G}$ of undirected graphs (short: $(\mathcal{G},\mapsto)$), forms a partially ordered set (a \emph{poset}). [Proof in Appendix]
\end{lemma}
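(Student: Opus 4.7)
The plan is to verify the three poset axioms of $\mapsto$ on $\mathcal{G}$ (modulo graph isomorphism): reflexivity, transitivity, and antisymmetry. Reflexivity is immediate by exhibiting the identity embedding, $\pi_V(v)=v$ and $\pi_E(e)=(e)$ (each edge mapped to itself, viewed as a one-edge path); all four conditions of Definition~\ref{def:embedding} hold trivially because capacities are compared against themselves and no node is used as a relay.

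For transitivity, suppose $\pi\colon A\mapsto B$ and $\sigma\colon B\mapsto C$. I would define the composition $\tau=\sigma\circ\pi$ on nodes in the obvious way, and for an edge $e_A\in E_A$ concatenate, in order along $\pi_E(e_A)=v_{B,0}v_{B,1}\dots v_{B,\ell}$, the paths $\sigma_E(\{v_{B,i-1},v_{B,i}\})$ interleaved with the fixed $C$-nodes $\sigma_V(v_{B,i})$. This yields a walk from $\tau_V(u)$ to $\tau_V(u')$ in $C$; if the walk is not simple, shortcut any closed sub-walks to obtain a genuine path. Node- and edge-capacity on $C$ can then be verified by double-counting: the $A$-nodes placed on $v_C\in V_C$ are precisely those whose $\pi$-image lies in $\sigma_V^{-1}(v_C)$, and the relay and edge loads at $v_C$ and at each $e_C\in E_C$ are bounded by telescoping the bounds from the two given embeddings. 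Shortcutting only removes nodes and edges from the walk, so it cannot violate any remaining capacity inequality.

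Antisymmetry is where the real work sits, and my approach is a weight-monovariant argument. First I would observe that $A\mapsto B$ implies $W_V(A):=\sum_{v\in V_A}w(v)\le W_V(B)$ and $W_E(A):=\sum_{e\in E_A}w(e)\le W_E(B)$: summing the node-capacity constraints across $V_B$ gives the first, and the second follows because each $A$-edge contributes its weight to at least one $B$-edge on its image path. If both $A\mapsto B$ and $B\mapsto A$ hold, equality propagates through every step. Because $\epsilon>0$, the equality in the node bound forces $|\pi_E^{-1}(v_B)|=0$ for all $v_B$, i.e.\ no relay usage; then the edge equality forces every $A$-edge to map to a $B$-edge (path length exactly $1$). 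Saturation of node capacities, combined with positivity of weights, makes $\sigma\circ\pi\colon A\mapsto A$ cover every node of $A$, hence a bijection; symmetrically for $\pi\circ\sigma$. Thus $\pi_V$ is a weight-preserving bijection, and saturation of edge capacities turns $\pi_E$ into an edge-bijection preserving adjacency, i.e.\ $\pi$ is an isomorphism.

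The main obstacle I expect is the transitivity step: the composed edge-paths are a priori walks rather than simple paths, so one must formalise the shortcutting, and the relay-load bookkeeping at a $C$-node is subtle because a single $A$-edge may contribute to $|\tau_E^{-1}(v_C)|$ either through an internal vertex of some $\sigma_E(\{v_{B,i-1},v_{B,i}\})$ or through an internal $v_{B,i}$ with $\sigma_V(v_{B,i})=v_C$. The antisymmetry argument is conceptually clean but also depends critically on $\epsilon>0$ and on weights being positive, so any proof should make these hypotheses explicit.
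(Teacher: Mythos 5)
Your reflexivity and transitivity arguments follow the paper's proof essentially verbatim: the identity embedding, then composition of the two mappings with the concatenated edge-paths shortcut from walks to simple paths, and the same (brief) capacity bookkeeping; nothing to flag there beyond the relay-accounting subtlety you already note, which the paper's own proof glosses over in the same way. For antisymmetry you take a genuinely different route. The paper argues by cardinality: it first forces $|V_A|=|V_B|$ and $|E_A|=|E_B|$ via a pigeonhole/capacity argument, concludes that $\pi_1,\pi_2$ define a permutation of the vertices, and then shows each virtual edge must map to a single substrate edge (excluding the empty path by node-capacity violation). You instead sum the capacity constraints and use total node/edge weight as a monovariant, letting $\epsilon>0$ kill relay usage and saturation force bijectivity. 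Your version is more systematic and makes explicit the hypotheses (strictly positive relay cost and weights) that the paper uses only implicitly.

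There is, however, one concrete flaw in the antisymmetry step as you state it. The opening claim that $A\mapsto B$ implies $W_E(A)\le W_E(B)$ is false in this model: the paper explicitly allows two virtual endpoints to be co-located on one substrate node, in which case their virtual edge maps to the \emph{empty} path at zero cost, so a heavy edge can vanish entirely (e.g.\ a two-node guest with a large edge weight embeds into a single substrate node of sufficient node capacity and no edges). For the same reason, ``edge equality forces path length exactly $1$'' does not follow: with no relays the image lengths lie in $\{0,1\}$, and the weighted-length bound cannot exclude length $0$. The repair uses only ingredients you already have, but in a different order: from node-weight equality and $\epsilon>0$ deduce zero relay load and exact saturation of every substrate node; positivity of node weights then makes the node map a bijection; bijectivity rules out co-located endpoints, hence no empty edge images, and combined with ``no relays'' every $A$-edge maps to exactly one $B$-edge; injectivity of this edge map (simple graphs plus node bijectivity) and the symmetric argument give an edge bijection, and only then does the edge-weight summation deliver weight preservation. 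With that reordering your argument is sound; as written, the edge-weight observation you lead with does not hold.
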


We are interested in algorithms that ``guess'' the target
topology $H$ (the host graph) among the set $\mathcal{H}$ of possible substrate topologies. Concretely, we assume that given a VNet request $G$ (a guest graph), the substrate provider
always responds with \emph{an honest (binary) reply} $R$ informing the customer whether the
requested VNet $G$ is embeddedable on the substrate $H$.
Based on this reply, the attacker may then decide to ask the provider to embed the corresponding
VNet $G$ on $H$, or it may not embed it and continue asking for other VNets.
Let $\ALG$ be an algorithm asking a series of requests
$G_1,\ldots, G_t$ to reveal $H$.
The \emph{request complexity} to infer the topology %(or a certain property)
is measured in the number of requests $t$ (in the worst case) until $\ALG$ issues a request $G_t$ which is isomorphic to $H$ and \emph{terminates}
(i.e., $\ALG$ knows that $H=G_t$ and does not issue  further requests).\\

\textbf{Related Work.}
Embedding VNets is an intensively studied problem and there exists a large body of literature (e.g.,~\cite{ammar,pb-embed,ucc12mip,infocom12}), also on distributed computing approaches~\cite{distmapping} and online algorithms~\cite{podc11,moti}.
Our work is orthogonal to this line of literature in the sense that
we assume that an (arbitrary and not necessarily resource-optimal) embedding algorithm is \emph{given}.
Instead, we focus on the question of how the feedback obtained
through these algorithms can be exploited, and we study the
implications on the information which can be obtained about a provider's infrastructure.

Our work studies a new kind of topology inference problem. Traditionally, much graph discovery research has been conducted in the context of today's complex
networks such as the Internet which have
fascinated scientists for many years, and there exists a wealth of results on the topic.
The
classic instrument to discover Internet topologies is
\emph{traceroute}~\cite{traceroutedata}, but the tool has several problems which makes the problem challenging. One complication of traceroute stems from the
fact that routers may appear as stars (i.e., anonymous nodes), which renders the accurate characterization of Internet
topologies difficult~\cite{icdcn11,disc11,infocom03}.
\emph{Network tomography} is another important field of topology discovery. In network tomography, topologies are explored using pairwise end-to-end
measurements, without the cooperation of nodes along these paths.
This approach is quite flexible and applicable in various contexts,
e.g., in social networks. For a good discussion of this approach as
well as results for a routing model along shortest and second
shortest paths see~\cite{sigmetrics11}. For
example,~\cite{sigmetrics11} shows that for sparse random graphs, a
relatively small number of cooperating participants is sufficient to
discover a network fairly well.
Both the traceroute and the network tomography problems differ from our virtual network topology discovery problem in that the exploration
there is inherently \emph{path-based} while we can ask for entire virtual graphs.

The paper closest to ours is~\cite{minicom13}. It introduces the topology extraction model studied in this paper, and presents
an asymptotically optimal algorithm for the cactus graph family (request complexity $\Theta(n)$), as well as a general algorithm
(based on spanning trees) with request complexity $\Theta(n^2)$.

\section{Motif-Based Dictionary Framework}\label{sec:framework-app}

The algorithms for tree and cactus graphs presented in~\cite{minicom13}
can be extended to a framework for the discovery of more general graph
classes. It is based on the idea of growing sequences of subgraphs from nodes
discovered so far. Intuitively, in order to describe the ``knitting'' of
a given part of a graph, it is often sufficient to use a small set of graph
\emph{motifs}, without specifying all the details of how many substrate nodes
are required to realize the motif. We start this section with the introduction of motifs and their composition and expansion. Then we present the dictionary concept, which structures motif sequences in a way that enables the efficient host graph discovery with algorithm~$\FRAME$. Subsequently, we give some examples and finally provide the formal analysis of the request complexity.

\subsection{Motifs: Composition and Expansion}

In order to define the motif set of a graph family $\mathcal{H}$, we need the concept of \emph{chain (graph) $C$}: $C$ is just a graph $G=(\{v_1,v_2\}, \{v_1,v_2\})$
consisting of two nodes and a single link. As its edge represents a virtual link that may be embedded along entire path in the substrate network, it is called a \emph{chain}.

\begin{definition}[Motif\label{def:motif-app}]
Given a graph family $\mathcal{H}$, the set of motifs of $\mathcal{H}$ is defined constructively: If any member of $H\in\mathcal{H}$
has an edge cut of size one, the \emph{chain $C$} is a motif for $\mathcal{H}$.
All remaining motifs are at least 2-connected (i.e.,
any pair of nodes in a motif is connected by at least two vertex-disjoint paths). These motifs
can be derived by the at least 2-connected components of any $H\in\mathcal{H}$ by repeatedly removing all nodes with degree smaller
or equal than two from $H$ (such nodes do not contribute to the knitting) and merging the incident edges, as long as all remaining cycles do not contain parallel edges.
Only one instance
of isomorphic motifs is kept.
\end{definition}

Note that the set of motifs of $\mathcal{H}$ can also be computed by iteratively by removing all low-degree nodes and subsequently determine the graphs connecting nodes constituting a vertex-cut of size one for each member $H\in \mathcal{H}$.
In other words, the motif set $\mathcal{M}$ of a graph family
$\mathcal{H}$ is a set of non-isomorphic minimal (in terms of number of nodes) graphs that are required to
construct each member $H\in\mathcal{H}$ by taking a motif and either replacing edges with two edges connected by a node or gluing together components several times. More formally, a graph family containing all elements of $\mathcal{H}$ can be constructed by applying the following rules repeatedly.
\begin{definition}[Rules]\label{def:motif_rules}
(1) Create a new graph consisting of a motif $M\in\mathcal{M}$ (\emph{New Motif Rule}).
(2) Given a graph created by these rules, replace an edge $e$ of $H$ by a new node and two new edges connecting the incident nodes of $e$ to the new node
(\emph{Insert Node Rule}).
(3) Given two graphs created by these rules, attach them to each other such that they share exactly one node (\emph{Merge Rule}).
\end{definition}

Being the inverse operations of the ones to determine the motif set, these rules are sufficient to compose all graphs in $\mathcal{H}$:
If $\mathcal{M}$  includes all motifs of $\mathcal{H}$, it also includes all 2-connected components of $H$,
according to Definition~\ref{def:motif-app}. These motifs can be glued together using the \emph{Merge Rule}, and
eventually the low-degree nodes can be added using the \emph{Insert Node Rule}.
Therefore, we have the following lemma.
\begin{lemma}\label{lemma:equiv}
Given the motifs $\mathcal{M}$ of a graph family $\mathcal{H}$, the repeated application of the rules in Definition~\ref{def:motif_rules} allows us to construct each member $H\in\mathcal{H}$.
\end{lemma}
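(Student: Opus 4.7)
The plan is to invert the motif extraction procedure step by step: since Definition~\ref{def:motif-app} defines $\mathcal{M}$ by (a) splitting $H$ at size-$1$ vertex cuts into $2$-connected blocks and bridges, and (b) repeatedly suppressing degree-$\leq 2$ nodes within each block, the three rules of Definition~\ref{def:motif_rules} (\emph{New Motif}, \emph{Insert Node}, \emph{Merge}) are exactly the inverse operations of (a) and (b). So I would argue constructively that any fixed $H\in\mathcal{H}$ can be rebuilt from $\mathcal{M}$ by first realizing each block from a motif, and then gluing the blocks together along cut vertices.

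Concretely, I would consider the block-cut tree $T_H$ of $H$, whose nodes are the blocks (maximal $2$-connected subgraphs) and the cut vertices of $H$, and then proceed by induction on the number of blocks. For a single block $B$, two cases arise. If $B$ is a bridge (an edge cut of size one), then by Definition~\ref{def:motif-app} the chain $C\in\mathcal{M}$; apply the \emph{New Motif Rule} to obtain $C$ and then apply the \emph{Insert Node Rule} along its single edge as many times as needed to reproduce the subdivided path equal to $B$ in $H$ (zero applications if $B$ is already a single edge). If $B$ is $2$-connected, let $M_B$ be the motif obtained by iteratively suppressing all degree-$\leq 2$ nodes of $B$; by construction $M_B\in\mathcal{M}$, so the \emph{New Motif Rule} produces $M_B$, and then each suppressed degree-$2$ node can be reinserted by one application of the \emph{Insert Node Rule} on the corresponding edge, reversing the suppression and yielding $B$ itself.

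For the inductive step, pick a leaf block $B$ of $T_H$ attached to the rest of $H$ at a single cut vertex $v$. By the inductive hypothesis $H\setminus (B\setminus\{v\})$ (the graph obtained by deleting all nodes of $B$ except $v$) can be constructed from $\mathcal{M}$; independently, $B$ can be constructed by the single-block argument above. Applying the \emph{Merge Rule} at $v$ glues the two pieces into $H$, completing the induction.

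The main technical point I expect to have to handle carefully is the suppression step on $2$-connected blocks: one must check that iterated degree-$\leq 2$ removal terminates at a graph in $\mathcal{M}$ rather than at some artefact like a multigraph. This is precisely the caveat already built into Definition~\ref{def:motif-app} (``as long as all remaining cycles do not contain parallel edges''), so the inverse direction is safe: each \emph{Insert Node} operation strictly increases the number of nodes, never creates parallel edges, and its effect is well-defined on a simple graph. Once this correspondence between suppression and node insertion is spelled out, together with the block-cut decomposition handled by the \emph{Merge Rule} and the chain motif covering bridges, the construction of an arbitrary $H\in\mathcal{H}$ follows.
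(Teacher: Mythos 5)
Your proposal is correct and follows essentially the same route as the paper's (brief) argument: the rules of Definition~\ref{def:motif_rules} are the inverses of the motif-extraction steps, the \emph{Merge Rule} glues the 2-connected blocks and bridges at cut vertices, and the \emph{Insert Node Rule} restores the suppressed degree-2 nodes. Your block-cut-tree induction merely formalizes the paper's sketch, with the cosmetic difference that you expand each block from its motif before merging, whereas the paper merges the motifs first and inserts the low-degree nodes afterwards.
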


However, note that it may also be possible to use these rules to construct graphs that are \emph{not} part of the family.
The following lemma shows that when degree-two nodes are added to a motif $M$ to form a graph $G$, all network elements (substrate nodes and links)
are \emph{used} when embedding $M$ in $G$ (i.e., $M\mapsto G$).
\begin{lemma}\label{lem:nodesLoad-app}
  Let $M \in (\mathcal{M}\setminus\{C\})$ be an arbitrary two-connected motif,
  and let $G$ be a graph obtained by
  applying the \emph{Insert Node Rule} (Rule $2$ of Definition~\ref{def:motif_rules}) to motif $M$.
  Then, an embedding $M\mapsto G$ involves all nodes and edges in $G$: at least $\epsilon$ resources are used on all nodes and edges.
\end{lemma}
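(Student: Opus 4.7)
The plan is to combine a degree argument with an edge-counting argument that pins down any embedding $M\mapsto G$ as essentially the natural one. First, I would unpack the structure of $G$. Because $M\in\mathcal{M}\setminus\{C\}$ arises from the motif construction in Definition~\ref{def:motif-app} (which exhaustively deletes nodes of degree at most two), every node of $M$ has degree at least three. Each application of the \emph{Insert Node Rule} subdivides one edge, so every newly inserted node has degree exactly two in $G$, whereas every ``original'' node (already present in $M$) retains its degree $\geq 3$. If $k\geq 1$ such insertions are performed, then $|V_G|=|V_M|+k$ and $|E_G|=|E_M|+k$, and the edge set of $G$ partitions naturally into $|E_M|$ internally-disjoint paths $\{P_e:e\in E_M\}$, one per edge of $M$, each joining the two original endpoints of the corresponding edge of $M$ and otherwise passing only through degree-two inserted nodes. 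Throughout, I would work in the unit-demand setting used when the framework issues requests, so that each host edge of $G$ carries at most one virtual link and each host node either hosts one virtual node or serves as a relay.

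Second, I would show that $\pi_V$ is a bijection from $V_M$ onto the set of original nodes in $V_G$. For any $v\in V_M$, the at least three edges of $M$ incident to $v$ map to at least three paths in $G$ emanating from $\pi_V(v)$. If $\pi_V(v)$ were an inserted (degree-two) node, two of these paths would have to leave it through the same incident edge of $G$, violating the unit edge capacity. Hence $\pi_V(v)$ is always an original node. Unit node capacity then forces $\pi_V$ to be injective, and since there are exactly $|V_M|$ original nodes in $G$, the map is a bijection onto them.

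Third, I would perform the edge count. By unit edge capacity the paths $\{\pi_E(e'):e'\in E_M\}$ are pairwise edge-disjoint. Moreover, since every inserted node has degree two in $G$, any path $\pi_E(e')$ that uses an edge of some $P_e$ must continue along $P_e$ until it reaches one of its two original endpoints; hence $\pi_E(e')$ either traverses $P_e$ from endpoint to endpoint or does not touch its edges at all. Contracting each $P_e$ back to a single edge therefore recovers $M$, and under this contraction each $\pi_E(e')$ projects to a walk $W_{e'}$ in $M$ whose length equals the number of $P_e$-paths it traverses. The walks are edge-disjoint in $M$, so $\sum_{e'}|W_{e'}|\leq |E_M|$; since there are $|E_M|$ walks and each has length at least one, every walk has length exactly one. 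Thus each $\pi_E(e')$ coincides with some $P_e$, every $P_e$ is used, every inserted node appears as a relay, and every edge of $G$ carries exactly one virtual link. Combined with the bijection from step two, every node of $G$ either hosts a virtual node or is a relay for at least one virtual link, so at least $\epsilon$ resources are consumed on every node and every edge of $G$.

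The crux, and where I expect most of the work, is the second step: ruling out that any virtual node maps to an inserted node and showing that the node mapping is injective. This is precisely where the minimum-degree-three property of motifs enters, and it explains why the chain motif $C$ must be excluded from the statement. The third step is then essentially mechanical, and extends verbatim to iterated applications of the Insert Node Rule, because further subdivisions merely lengthen individual paths $P_e$ without altering the partition of $E_G$ on which the walk projection is based.
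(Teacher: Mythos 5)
Your proof hinges on the premise that ``every node of $M$ has degree at least three'' because the motif construction ``exhaustively deletes nodes of degree at most two.'' That premise is false under the paper's Definition~\ref{def:motif-app}: degree-$\le 2$ nodes are removed only \emph{as long as no parallel edges are created}, so motifs may retain degree-two nodes. The cycle $Y$ (a triangle), the diamond $D$ and $K_{2,3}$ --- all explicitly listed as motifs in the example dictionary of Figure~\ref{fig:dictionary} --- each contain degree-two nodes, and all of them lie in $\mathcal{M}\setminus\{C\}$, so the lemma must cover them. This breaks your second step, and not just formally: for the triangle $Y$ subdivided once (a $4$-cycle $G$), a valid embedding $Y\mapsto G$ may place the three virtual nodes on \emph{any} three of the four substrate nodes, so $\pi_V$ need not be a bijection onto the ``original'' nodes; your claimed dichotomy (original nodes have degree $\ge 3$, inserted nodes degree $2$) simply does not exist for such motifs. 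Since your third step's contraction/walk-counting argument presupposes that every $\pi_E(e')$ runs between original endpoints and traverses whole paths $P_e$, it inherits the same failure: in the $4$-cycle example the virtual edges map to arcs that do not coincide with the $P_e$ partition at all. The lemma's conclusion still holds in these cases (e.g., three pairwise edge-disjoint nonempty paths pairwise connecting three distinct nodes of a cycle must exhaust all its edges), but that requires an argument you have not given; your proof only covers motifs with minimum degree three.

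By contrast, the paper's (admittedly terse) proof does not try to pin the embedding down to the ``natural'' one. It argues per element of $G$: a node hosting a virtual node is trivially loaded, and for a node $v$ inserted by Rule~2 on a subdivided motif edge it uses the facts that Rule~2 preserves the degrees of the subdivision endpoints and that links have unit capacity, so the edge-disjoint paths leaving those endpoints must saturate all their incident substrate edges, forcing the paths to run through $v$ and load it with $\epsilon$. To repair your write-up you would either have to adopt that local saturation argument, or split into cases and handle motifs with degree-two nodes (cycles, diamonds, $K_{2,3}$, \dots) by a separate edge-disjointness counting, since the ``embedding must be the identity up to subdivision'' route is genuinely unavailable there.
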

\begin{proof}  Let $v\in G$. Clearly, if there exists $ u \in M$ such that $v= \pi(u)$,
  then $v$'s capacity is used fully.
  Otherwise, $v$ was added by Rule~$2$. Let $a,b$ be the two nodes of $G$ between which Rule~$2$ was applied, and hence
  $\{\pirev{a},\pirev{b}\} \in E_M$ must be a motif edge.
  Observe that for these nodes' degrees it holds that $\deg(a)=\deg(\pirev{a})$ and $\deg(b)=\deg(\pirev{b})$ since Rule~$2$ never modifies the
  degree of the old nodes in the host graph $G$. Since links are of unit capacity, each
  substrate link can only be used once: at $a$ at most $\deg(a)$
  edge-disjoint paths can originate, which yields a contradiction to the
  degree bound, and the relaying node $v$ has a load of $\epsilon$.
\hfill$\Box$ \end{proof}

Lemma~\ref{lem:nodesLoad-app} implies that no additional
nodes can be inserted to an existing embedding. In other words, a motif constitutes a ``minimal reservation pattern''. As we will see, our algorithm
will exploit this invariant that motifs cover the entire graph knitting, and adds simple nodes (of degree 2)
only in a later phase.
\begin{corollary}\label{cor:embedItself-app}
  Let $M \in (\mathcal{M} \backslash
  \{C\})$ and let $G$ be a graph obtained by applying Rule~$2$ of Definition~\ref{def:motif_rules} to motif $M$.
  Then, no additional node can be embedded on $G$ after embedding $M\mapsto G$.
\end{corollary}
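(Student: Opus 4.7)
The corollary is essentially a contrapositive packaging of Lemma~\ref{lem:nodesLoad-app}, so the plan is very short. First I would invoke Lemma~\ref{lem:nodesLoad-app} to observe that after embedding $M\mapsto G$, every node $v\in V_G$ already carries load at least $\epsilon$: either $v$ hosts some original motif node $u$ with $v=\pi(u)$, in which case $v$'s capacity is exhausted because node capacities in $G$ are inherited from the corresponding motif nodes (so $w(v)=w(u)$ and the entire $w(v)$ is consumed), or $v$ was introduced by Rule~$2$ and serves exclusively as a relay, in which case it is a degree-two node with capacity set to precisely $\epsilon$ (the minimum necessary for the unique virtual link traversing it). Either way the node is saturated.

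Next I would handle the edges in the same spirit: every edge of $G$ has unit capacity, and the proof of Lemma~\ref{lem:nodesLoad-app} exhibits, for each edge, a distinct virtual link of $M$ routed through it (this is what the counting argument at $a$ with $\deg(a)=\deg(\pi^{-1}(a))$ achieves, once we add that $M$ is two-connected so no edge is slack). Thus no substrate edge retains any residual capacity.

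From these two saturation facts the corollary is immediate: suppose for contradiction that an additional virtual node $u'$ could be placed on $G$ on top of the embedding $M\mapsto G$. Then $u'$ would have to be hosted on some $v\in V_G$, adding $w(u')>0$ to its load; but $v$ is already at capacity by the previous step, violating condition~$(iii)$ of Definition~\ref{def:embedding}. Moreover, any virtual edge incident to $u'$ would need at least one substrate link, and every such link is already saturated, violating condition~$(iv)$. Either way we reach a contradiction, establishing the corollary.

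The only non-routine point, and the place where I would be most careful, is the implicit capacity assignment on $G$: one must argue that the graph produced by the \emph{Insert Node Rule} carries exactly the minimal capacities required for the single embedding $M\mapsto G$ (motif nodes inheriting $w_M$ and relay nodes getting exactly $\epsilon$). Once that is in place, the rest is bookkeeping against Lemma~\ref{lem:nodesLoad-app}.
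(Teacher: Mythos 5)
Your overall strategy matches the paper's: the corollary is stated there with no separate proof, simply as a direct consequence of Lemma~\ref{lem:nodesLoad-app} (every node and edge of $G$ already carries load at least $\epsilon$ after $M\mapsto G$). However, the keystone of your contradiction is a capacity assignment that you invented and that the paper does not make: you assume the Rule~2 relay nodes of $G$ are given capacity exactly $\epsilon$ (and motif nodes exactly the demand of the hosted node), so that every node is \emph{exactly} saturated, and then you derive a contradiction from adding any $w(u')>0$. In the paper's model the substrate $G$ has uniform (unit) node and link capacities, exactly like every other host graph; a relay node is not saturated, it has residual capacity $1-\epsilon>0$. With your framing (an extra node of arbitrarily small demand $w(u')>0$), the statement would in fact be false on such a relay node, so the argument as written does not go through, and you correctly sensed this when you flagged the capacity assignment as the ``non-routine point.''

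The intended (and correct) justification uses the other implicit convention visible in the proof of Lemma~\ref{lem:nodesLoad-app} (``then $v$'s capacity is used fully''): a virtual node demands the \emph{full} capacity of a substrate node. Hence it suffices that every node of $G$ has \emph{strictly positive} load after $M\mapsto G$ --- nodes hosting motif nodes have zero residual, relay nodes have residual at most $1-\epsilon<1$ --- so condition~($iii$) of Definition~\ref{def:embedding} forbids mapping any additional virtual node anywhere; no exact saturation, no special capacity choice, and no edge-saturation argument is needed. So the gap is not in the structure of your proof but in replacing ``residual capacity is strictly below a full node demand'' by ``residual capacity is zero,'' which forces you to fabricate capacities; once you argue with unit capacities and full-capacity node demands, the corollary follows from Lemma~\ref{lem:nodesLoad-app} exactly as the paper intends.
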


Next, we want to \emph{combine} motifs explore larger ``knittings'' of graphs. Each motif pair is glued together at a single node \emph{or edge} (``attachment point''):
We need to be able to conceptually join to motifs at edges as well because the corresponding edge of the motif can be expanded by the \emph{Insert Node Rule} to create a node where the motifs can be joined.

\begin{definition}[Motif Sequences, Subsequences, Attachment Points, $\prec$]
A \emph{motif sequence} $S$ is a list $S=(M_1{a_1a'_1}M_2\ldots M_k)$ where $\forall i:~M_i\in
\mathcal{M}$ and where $M_i$ is glued together at exactly one node with $M_{i-1}$ (i.e., $M_i$ is ``attached'' to a node of motif $M_{i-1}$): the notation $M_{i-1}{a_{i-1}a'_{i-1}}M_{i}$ specifies
the selected attachment points $a_{i-1}$ and $a'_{i-1}$.
If the attachment points are irrelevant, we use the notation $S=(M_1M_2\ldots M_k)$ and $M_i^k$ denotes an arbitrary sequence consisting of $k$ instances of $M_i$.
If $S$ can be decomposed into $S=S_1S_2S_3$, where $S_1, S_2$ and $S_3$ are (possibly empty) motif sequences as well, then $S_1, S_2$
and $S_3$ are called \emph{subsequences} of $S$, denoted by~$\prec$.
\end{definition}

In the following, we will sometimes use the \emph{Kleene star} notation $X^{\star}$ to denote a sequence of (zero or more) elements of $X$ attached to each other.

\begin{figure*}
\begin{center}
\includegraphics[width=0.88\columnwidth]{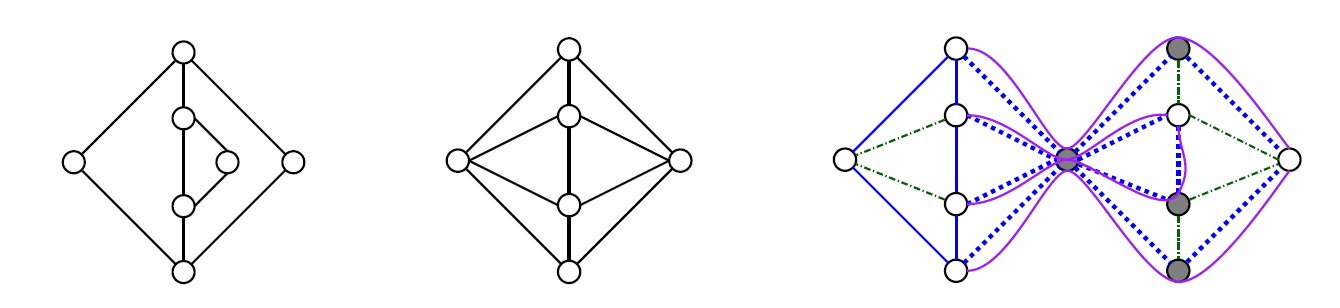}
\vspace{-1.0\baselineskip}
\end{center}
\caption{\emph{Left:} Motif $A$. \emph{Center:} Motif $B$. Observe that $A$
  $\not \mapsto$ $B$. \emph{Right:} Motif $A$ is embedded into two consecutive Motifs $B$:
solid lines are virtual
  links mapped on single substrate links, solid curves are virtual links mapped
  on multiple substrate links, dotted lines are substrate links implementing a
  multi-hop virtual link, and dashed lines are substrate unused links. Grayed
  nodes are relay-only nodes. Observe that the central node has a relaying load
  of $4\epsilon$.}
\label{fig:counterExAB-app}
\vspace{-1.0\baselineskip}
\end{figure*}

One has to be careful when arguing about the embedding of motif sequences, as illustrated in Figure~\ref{fig:counterExAB-app} which shows a counter example for
$M_i \not \mapsto M_j \Rightarrow \forall
k>0, M_i \not \mapsto M_j^k$. This means that we typically cannot just incrementally add motif occurrences to discover a certain substructure.
This is the motivation for introducing the concept of a \emph{dictionary}
which imposes an order on motif sequences and their attachment points.

\subsection{Dictionary Structure and Existence}

 In a nutshell, a dictionary is a \emph{Directed Acyclic Graph (DAG)} defined over all possible motifs
  $\mathcal{M}$.  and imposes an order (poset relationship $\mapsto$) on problematic motif sequences which need to be
 embedded one before the other (e.g., the composition
  depicted in Figure \ref{fig:counterExAB-app}). To distinguish them from sequences, dictionary entries are called \emph{words}.

\begin{definition}[Dictionary, Words]
  A \emph{dictionary} $D(V_D,E_D)$ is a directed acyclic graph (DAG) over a set of motif sequences $V_D$ together
  with their attachment points. In the context of the dictionary, we will call a motif sequence \emph{word}.
  The links $E_D$ represent the poset
 embedding relationship $\mapsto$.

 Concretely, the DAG has
  a single root $r$, namely the chain graph $C$ (with two attachment points). In general, the \emph{attachment points}
  of each vertex $v\in V_D$ describing a word $w$ define how $w$ can be connected to other words.  The directed
  edges $E_D=(v_1,v_2)$ represent the transitively reduced embedding poset
  relation with the chain $C$ context: $C v_1 C$ is embeddable in $C v_2 C$ and there is
  no other word $C v_3 C$ such that $C v_1 C\mapsto C v_3 C$, $C v_3
    C\mapsto C v_2 C$ and $C v_3 C\not\mapsto C v_1 C$ holds. (The
    chains before and after the words are added to ensure that attachment points
    are ``used'': there is no edge between two isomorphic words with different
    attachment point pairs.)

 We require that the dictionary be \emph{robust to composition}: For any
  node $v$, let $R_v= \{ v' \in V_D, v\mapsto v'\}$ denote the ``reachable'' set
  of words in the graph and $\overline{R}_v = V_D \setminus
  R_i$ all other words. We require that $v \not \mapsto W, \forall W \in
  Q_i :=\overline{R}_i^\star \backslash R_i^\star$, where the transitive
  closure operator $X^\star$ denotes an arbitrary sequence (including the empty
  sequence) of elements in $X$ (according to their attachment points).

  % The indices of the words are derived from a \emph{linear extension} $\prec$ of
  % the dictionary entries (based on the order-extension principle): a total
  % ordering $(V,\prec)$ that ``respects'' the partial ordering, in the sense that
  % it arbitrarily orders non-related elements, while preserving the order of
  % related elements. Thus $\arg\max_x(D[x] \mapsto H)$ is well defined for every
  % host graph $H$.
\end{definition}

See Figure~\ref{fig:dictionary} for an example.
Informally, the robustness requirement means that the word represented by $v$ cannot be embedded in
any sequence of ``smaller'' words, unless a subsequence of this sequence is in
the dictionary as well. As an example, in a dictionary containing motifs
$A$ and $B$ from Figure~\ref{fig:counterExAB-app} would contain vertices $A$,
$B$ and also $BB$, and a path from $A$ to $BB$.
\begin{figure}[h!]
	\centering
		a)\includegraphics[width=0.480\columnwidth]{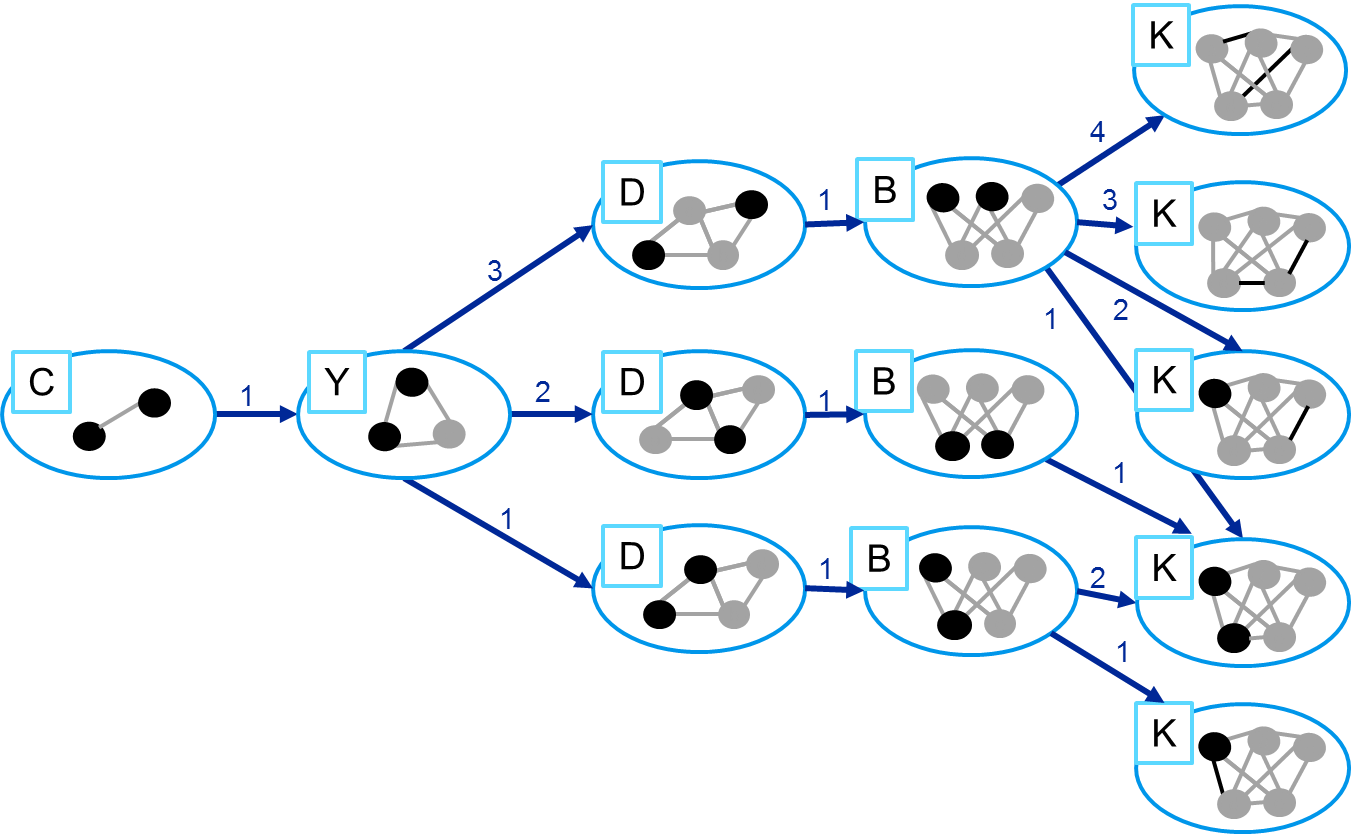}~~~~
		b) \includegraphics[width=0.36\columnwidth]{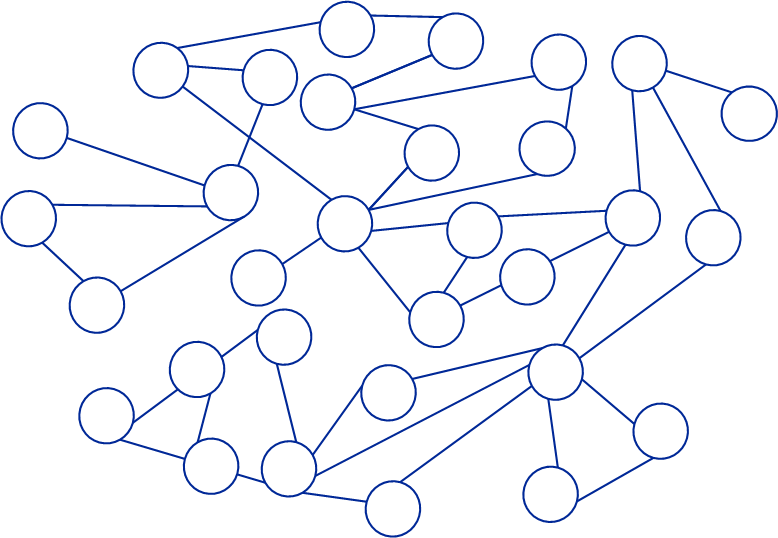}
	\vspace{-.5\baselineskip}\caption{a) Example dictionary with motifs Chain $C$, Cycle $Y$, Diamond $D$, complete bipartite graph $B = K_{2,3}$ and complete graph $K = K_5$. The attachment point pair of each word is black, the other nodes and edges of the words are grey. The edges of the dictionary are locally labeled, which is used in $\FRAME$ later. b) A graph that can be constructed from the dictionary words.}
	\label{fig:dictionary}
\vspace{-1.0\baselineskip}
\end{figure}
  In the following, we use the notation $\max_{v\in V_D}(v\mapsto S)$ to
  denote the set of ``maximal'' vertices with respect to their embeddability
  into $S$: $ i \in \max_{v\in V_D}(v\mapsto S) \Leftrightarrow (i \mapsto S)
  \wedge (\forall j \in \Gamma^+(i), j \not \mapsto S)$, where $\Gamma^+(v)$ denotes the set of outgoing neighbors of $v$.
Furthermore, we say that a dictionary $D$ \emph{covers} a motif sequence
$S$ iff $S$ can be formed by concatenating dictionary words (henceforth denoted
by $S \in D^\star$) at the specified attachment points. More generally, a
dictionary covers a graph, if it can be formed by merging sequences of
$D^\star$.

Let us now derive some properties of the dictionary which are crucial for a
proper substrate topology discovery. First we consider maximal dictionary words
which can serve as embedding ``anchors'' in our algorithm.
\begin{lemma}
\label{lemma:dico-property-app}
Let $D$ be a dictionary covering a sequence $S$ of motifs, and let $i \in
\max_{v\in V_D}(v \mapsto S)$. Then $i$ constitutes a subsequence of $S$, i.e.,
$S$ can be decomposed to $S_1iS_2$, and $S$ contains no words of order
at most $i$, i.e., $S_1,S_2 \in (\overline{R}_i \cup \{i\})^\star$.
\end{lemma}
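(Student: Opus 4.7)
The plan is to exploit the dictionary's robustness requirement twice: once to force some word in a dictionary decomposition of $S$ to lie in $R_i$, and then, using the maximality of $i$, to force that word to be exactly $i$. The very same strategy will control what can appear in $S_1$ and $S_2$.

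Since $D$ covers $S$, I would first write $S = w_1 w_2 \cdots w_n$ with $w_j \in V_D$. Applying robustness to $i$ in contrapositive form, $i \mapsto S$ implies $S \notin Q_i = \overline{R}_i^\star \setminus R_i^\star$, hence either $S \in R_i^\star$ or $S \notin \overline{R}_i^\star$. In either case at least one $w_j$ lies in $R_i$.

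Next I would promote this word to be $i$ itself. Suppose for contradiction that $w_j \in R_i \setminus \{i\}$. Then there is a directed path in $D$ from $i$ to $w_j$; let $k \in \Gamma^+(i)$ be its first hop. By transitivity of $\mapsto$ (Lemma~\ref{lemma:poset}) applied along the tail of that path, $k \mapsto w_j$. Since $w_j$ is literally one of the words glued together to form $S$, the identity mapping on that subgraph gives $w_j \mapsto S$; composing the two embeddings yields $k \mapsto S$, contradicting the defining maximality property $\forall k' \in \Gamma^+(i)\colon k' \not\mapsto S$. Hence some $w_j = i$, so $S$ decomposes as $S_1\,i\,S_2$. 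The second claim, $S_1, S_2 \in (\overline{R}_i \cup \{i\})^\star$, follows by exactly the same contradiction: any word $w$ in $S_1$ or $S_2$ with $w \in R_i \setminus \{i\}$ would furnish an outgoing neighbor of $i$ that embeds in $S$.

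The one step that deserves more care than the rest is the assertion $w_j \mapsto S$ via the identity embedding. Intuitively this is immediate because $S$ is constructed by gluing the $w_j$ at their attachment points and contains a literal copy of $w_j$. Formally, however, one should verify that node and edge capacities plus the $\epsilon$ relay cost remain within bounds along this identity mapping; Lemma~\ref{lem:nodesLoad-app} and the embedding conventions of Definition~\ref{def:embedding} make this a short verification, but it is the place where one could slip up. Everything else reduces to carefully chaining robustness with the DAG-path argument.
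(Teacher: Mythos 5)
Your proof is correct and follows essentially the same route as the paper's: robustness of the dictionary rules out $S\in Q_i$, so some word of $R_i$ occurs as a subsequence of $S$, and the maximality of $i$ (via an outgoing neighbor of $i$ on the DAG path, transitivity of $\mapsto$, and the fact that a constituent word embeds into $S$) forces that word to be $i$ itself, with the same argument bounding what may appear in $S_1$ and $S_2$. Your extra care about the identity embedding $w_j\mapsto S$ only makes explicit a step the paper uses implicitly.
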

\begin{proof}
  By contradiction assume $i \in \max_{v\in V_D}(v \mapsto S)$ and $i$ is not a
  subsequence of $S$ (written $i \not\prec S$).  Since $D$ \emph{covers} $S$ we
  have $S \in V_D^\star$ by definition.
  % Let $Z$ be a subsequence of $S$ where
  % $i$ can be embedded, i.e., $S=AZB$ for two subsequences $A$, $B$ and $D[i]
  % \mapsto Z$.

  % Observe that $D[i]\prec Z$ does not hold due to our assumption, thus $Z \not
  % \in i^\star$.  Since $Z \prec S$, we have $Z \in D^\star$.

  Since $D$ is a dictionary and $i \mapsto S$ we know that $S \not \in
  Q_i$. Thus, $S \in D^\star \backslash Q_i$: $S$ has a
  subsequence of at least one word in $R_i$.  Thus there exists $k \in R_i$ such
  that $k \prec S$. If $k=i$ this implies $i \prec S$ which contradicts our
  assumption. Otherwise it means that $\exists j \in \Gamma^+(i)$ such that
  $j\mapsto k \prec S$, which contradicts the definition of $i\in \max_{v\in
    V_D}(v \mapsto S)$ and thus it must hold that $i\prec S$.
  \hfill$\Box$ \end{proof}

The following corollary is a direct consequence of the definition of $i \in \max_{v\in V_D}(v \mapsto S)$ and
Lemma~\ref{lemma:dico-property-app}: since for a motif sequence $S$ with $S
\in (\overline{R}_i \cup \{i\})^\star$, all the subsequences of $S$ that contain no
$i$ are in $\overline{R}_i^\star$.  As we will see, the corollary is useful to
identify the motif words composing a graph sequence, from the most complex words
to the least complex ones.
\begin{corollary}\label{cor:recursive-app}
  Let $D$ be a dictionary covering a motif sequence $S$, and let $i \in
  \max_{v\in V_D}(v \mapsto S)$. Then $S$ can be decomposed as a sequence $S =
  T_1 i T_2 i ,\ldots, i T_k$ with $T_j \in Q_i, \forall
  j=1,\ldots,k$.
\end{corollary}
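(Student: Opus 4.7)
The corollary is really just a ``book-keeping'' consequence of Lemma~\ref{lemma:dico-property-app}, so I would not invoke any fresh dictionary properties. The plan is to apply Lemma~\ref{lemma:dico-property-app} to obtain the decomposition $S=S_1\, i\, S_2$ with $S_1,S_2 \in (\overline{R}_i\cup\{i\})^\star$, and then observe that $S$ itself lies in $(\overline{R}_i\cup\{i\})^\star$ (concatenation with a single $i$ preserves membership in this set). Hence $S$ can be written as a concrete word list $S=w_1 w_2\cdots w_n$ with every $w_\ell$ either equal to $i$ or belonging to $\overline{R}_i$.

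From there the decomposition is immediate: I would let $\ell_1<\ell_2<\dots<\ell_{k-1}$ index the positions at which $w_\ell=i$, and set $T_j$ to be the (possibly empty) subsequence of $w_\ell$'s strictly between $\ell_{j-1}$ and $\ell_j$ (with $\ell_0=0$ and $\ell_k=n+1$). This yields exactly the form $S = T_1\, i\, T_2\, i\, \ldots\, i\, T_k$ claimed in the statement, and by construction each $T_j$ is a (sub)sequence of elements drawn entirely from $\overline{R}_i$, i.e., $T_j \in \overline{R}_i^\star$.

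It remains to upgrade ``$T_j \in \overline{R}_i^\star$'' to ``$T_j \in Q_i=\overline{R}_i^\star\setminus R_i^\star$''. Here I would use the fact that $R_i$ and $\overline{R}_i$ partition $V_D$: any nonempty sequence of words all of which lie in $\overline{R}_i$ contains at least one word outside $R_i$, and therefore cannot be expressed as a sequence of words from $R_i$. Thus each nonempty $T_j$ satisfies $T_j\notin R_i^\star$, so $T_j\in Q_i$.

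The only genuine technical wrinkle — and the part I expect to have to argue explicitly — is the empty-$T_j$ case, which arises precisely when two copies of $i$ appear consecutively in the word list. Under a literal reading of the notation $X^\star$ both $R_i^\star$ and $\overline{R}_i^\star$ contain the empty sequence, so an empty $T_j$ would sit in $\overline{R}_i^\star \cap R_i^\star$ and hence be excluded from $Q_i$. I would handle this by adopting the convention (consistent with the dictionary robustness clause, where $v\not\mapsto W$ is only meaningful for nonempty $W$) that $X^\star$ here ranges over nonempty sequences, and by collapsing any two neighbouring $i$'s into a single boundary so that every $T_j$ in the final decomposition is nonempty. With this convention, the claim $T_j\in Q_i$ follows from the partition argument above and the corollary is proved.
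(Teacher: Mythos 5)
There is a genuine gap in your final step, the upgrade from $T_j \in \overline{R}_i^\star$ to $T_j \in Q_i = \overline{R}_i^\star \setminus R_i^\star$. Membership of $T_j$ in $R_i^\star$ means that \emph{some} decomposition of $T_j$ into words of $R_i$ exists; your ``partition argument'' only rules out the particular decomposition you constructed, whose words happen to lie in $\overline{R}_i$. Since dictionary words are themselves motif sequences, the same chunk may admit a different split into words of $R_i$: in the paper's own example (Figure~\ref{fig:counterExAB-app} and the surrounding discussion) both $B$ and $BB$ are dictionary words with $A \mapsto BB$ but $A \not\mapsto B$, so a chunk whose motif list is $BB$ lies in $\overline{R}_A^\star$ via the split $(B)(B)$ and simultaneously in $R_A^\star$ via the single word $(BB)$. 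Hence ``every word of my chosen decomposition avoids $R_i$'' does not imply $T_j \notin R_i^\star$, and the conclusion $T_j \in Q_i$ does not follow from the partition of $V_D$ alone.

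Closing this gap is exactly where the maximality of $i$ enters, which you explicitly decline to use; the paper's justification invokes both Lemma~\ref{lemma:dico-property-app} \emph{and} the definition of $i \in \max_{v\in V_D}(v\mapsto S)$. The missing argument runs along these lines: if a nonempty chunk $T_j$ had a decomposition into $R_i$-words containing some word $w \neq i$, then $i \mapsto w$ strictly, and since the dictionary edges form the transitive reduction of the embedding poset there is an out-neighbor $u \in \Gamma^+(i)$ with $u \mapsto w \prec T_j \prec S$, hence $u \mapsto S$, contradicting $i \in \max_{v\in V_D}(v \mapsto S)$; if instead such a decomposition consisted only of copies of $i$, then $i$ occurs inside $T_j$ as a sub-list and must be extracted as an additional separator, i.e., one has to take the decomposition of $S$ that pulls out \emph{all} occurrences of $i$ rather than only those visible after iterating Lemma~\ref{lemma:dico-property-app}. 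Your handling of the empty-chunk case by a nonemptiness convention is sensible (and, if anything, more careful than the paper), but it does not repair this step; as written, the core of the corollary is not proved.
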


This corollary can be applied recursively to describe a motif sequence as a
sequence of dictionary entries. Note that a dictionary always exists.%  \gilles{I'm not sure we need the following
  % paragraph now, however if you think so we can write it as:Let $r$ be a
  % topological sorting of dictionary $D$. We note $r(i)$ the associated rank of
  % $i\in V_D$. This allows us to define a mapping of a sequence of integers
  % $(r(w_1), r(w_2), \ldots, r(w_k))$ to exactly one motif sequence consisting of the
  % dictionary words $w_1, \ldots, w_k$, where the words $w_i$ and
  % $w_{i+1}$ are connected to each other by their attachment points.}

% This allows us to define a mapping of a
% sequence of integers $(w_1, w_2, \ldots, w_k)$  to exactly one motif sequence
% consisting of the dictionary words $D[w_1], \ldots, D[w_k]$, where the words
% $D[w_i]$ and $D[w_{i+1}]$ are connected to each other by the attachment points
% of $D[w_i]$ and $D[w_{i+1}]$.

\begin{lemma}
\label{lem:dicoExistence}
There exists a dictionary $D=(V_D,E_D)$ that covers all member graphs $H$ of a motif graph family $\mathcal{H}$ with $n$ vertices. [Proof in Appendix]
\end{lemma}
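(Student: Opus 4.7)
I will construct $D$ explicitly via a closure procedure: start from the motif set $\mathcal{M}$ (augmented with the chain $C$ if necessary) and iteratively add any motif sequence that breaks robustness to composition, until no violation remains. The root and DAG properties will follow from Lemma~\ref{lemma:poset}, termination from an a priori finiteness bound on the candidate set, and coverage from Lemma~\ref{lemma:equiv}.

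First I would initialize $V_D := \mathcal{M} \cup \{C\}$, treating each motif as a word with its natural attachment points, and take $E_D$ to be the transitive reduction of the chain-padded embedding order specified in the definition of a dictionary. Since $\mapsto$ is a partial order on graphs by Lemma~\ref{lemma:poset}, $(V_D,E_D)$ is acyclic. Moreover $C \mapsto M$ for every other motif $M$, because every connected graph on two or more vertices admits the two-vertex, one-edge chain as a guest; hence $C$ is the unique minimum of the poset and thus the unique root of the DAG.

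Next, to enforce robustness, I would loop: while there exist a word $v \in V_D$ and a sequence $W \in Q_v$ (computed with respect to the current $V_D$) with $v \mapsto W$, insert $W$ as a new word into $V_D$ together with its induced attachment points, and retake the transitive reduction. After the insertion, $v \mapsto W$ and $W \in V_D$ imply $W \in R_v$, so as a length-one concatenation $W \in R_v^\star$ and thus $W \notin \overline{R}_v^\star \setminus R_v^\star = Q_v$, repairing the offending pair. Every inserted word is a motif sequence built from elements of $\mathcal{M}$, so up to isomorphism it lies in the finite family of connected graphs on at most $n$ vertices; since each iteration strictly enlarges $V_D$ inside this finite universe, the process terminates, yielding a DAG that satisfies all structural requirements of a dictionary.

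Finally, for coverage, given any $H \in \mathcal{H}$ Lemma~\ref{lemma:equiv} lets me construct $H$ from motifs in $\mathcal{M} \subseteq V_D$ via the three rules of Definition~\ref{def:motif_rules}; the Merge Rule decomposes $H$ (once the degree-two insertions along chain edges are accounted for) into motif sequences glued at single articulation nodes, and each such sequence already belongs to $D^\star$ because each component motif is itself a word of $D$. Hence $D$ covers $H$. The hard part of the argument is the closure loop: inserting a new $W$ perturbs $R_u$, $\overline{R}_u$, and $Q_u$ simultaneously for every other $u \in V_D$ and can in principle spawn fresh robustness violations, so termination cannot be certified by a local monotone potential and must instead be established through the global finiteness bound — this is the only point at which the hypothesis that graphs have $n$ vertices is actually used.
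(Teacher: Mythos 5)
There is a genuine gap in your termination argument, and it sits exactly where you admit the weight of the proof rests. Your closure loop inserts, for each robustness violation, a witnessing sequence $W \in Q_v$ with $v \mapsto W$, and you claim every such $W$ lies ``up to isomorphism in the finite family of connected graphs on at most $n$ vertices.'' Nothing bounds $W$ by $n$ vertices: $Q_v = \overline{R}_v^\star \setminus R_v^\star$ ranges over \emph{arbitrarily long} concatenations of words, and a small word can be embeddable only into a strictly longer composition (this is precisely the phenomenon of Figure~\ref{fig:counterExAB-app}, where $A \not\mapsto B$ but $A \mapsto BB$; if $B$ has close to $n$ vertices, the inserted word $BB$ already exceeds $n$). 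Once oversized words enter $V_D$, subsequent violations may require still longer witnesses, so the ``finite universe'' you invoke is not the universe your procedure actually lives in, and termination is unproven. The paper avoids this entirely by a one-shot construction rather than a repair loop: it puts \emph{every} motif sequence with at most $n$ nodes (one word per attachment-point pair) into $V_D$ from the start. Robustness for bounded sequences is then immediate — any sequence of at most $n$ nodes into which a word $i$ embeds is itself a word, hence lies in $R_i$ and thus in $R_i^\star$, so it cannot be in $Q_i$ — and sequences longer than $n$ nodes are dismissed by a separate remark using the fact that the host graph itself has only $n$ vertices, so they never matter for the algorithm. To salvage your approach you would need the same move (restrict the robustness check to sequences of at most $n$ nodes, and justify separately why that suffices), at which point the closure loop becomes unnecessary.

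A smaller but real omission concerns attachment points. You initialize $V_D$ with each motif ``with its natural attachment points,'' but a motif generally admits several non-isomorphic attachment-point pairs, and coverage in the sense of the definition requires concatenation at the \emph{specified} attachment points of dictionary words. The paper therefore adds one word per motif \emph{and} per attachment-point pair; with only one pair per motif, your coverage argument via Lemma~\ref{lemma:equiv} does not go through for sequences glued at the other pairs. Your poset/acyclicity and root observations are fine and match the paper's setting, but as written the proposal does not establish either termination of the construction or coverage in full.
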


\subsection{The Dictionary Algorithm}

With these concepts in mind, we are ready to describe our generalized graph discovery
algorithm called $\FRAME$ (cf~Algorithm~\ref{alg:motifRec-app}).
Basically, $\FRAME$ always grows a request graph $G=H'$ until it
is isomorphic to
$H$ (the graph to be discovered). This graph growing is performed according to the dictionary, i.e.,
we try to embed new motifs in the order imposed by the dictionary DAG.

$\FRAME$ is based on the observation that it is very costly to discover additional edges between nodes in a 2-connected component:
essentially, finding a single such edge requires testing all possibilities, which is quadratic in the component size.
Thus, it is crucial to first explore the basic ``knitting''
of the topology, i.e., the minors which are at least 2-connected (the \emph{motifs}).
In other words, we maintain the invariant that there are never two nodes $u,v$ which are not
$k$-connected in the currently requested graph $H'$ while they are $k$-connected in $H$; no path
relevant for the connectivity is overlooked and needs to be found later.

Nodes and edges which are not contributing to the
connectivity need not be explored at this stage yet, as they can be efficiently added later. Concretely, these additional nodes can then be discovered
by (1) using an \emph{edge expansion} (where
additional degree two nodes are added along a motif edge), and by (2) adding ``chains'' $C$ to the nodes (a virtual link $C$ constitutes an edge cut of size
one and can again be expanded to entire chain of nodes using \emph{edge expansion}).

Let us specify the \emph{topological order} in which algorithm~$\FRAME$
discovers the dictionary words.  First, for each node $v$ in $V_D$, we define an order on its
outgoing edges $\{(v,w) | w \in \Gamma^+(v)\}$. This order is sometimes referred to
as a ``port labeling'', and each path from the dictionary root (the chain $C$) to a node in
$V_D$ can be represented as the sequence of port labels at each traversed node
$(l_1,l_2,\ldots,l_l)$, where $l_1$ corresponds to a port number in $C$. We can simply use the lexicographic order on integers, $<^d$: $(a_1,
a_2, \dots, a_{n_1}) <^d (b_1,b_2, \dots, b_{n_2}) \iff ((\exists\ m > 0) \
(\forall\ i < m) (a_i = b_i) \land (a_m < b_m)) \vee( \forall i \in \{1,\ldots n_1\},
(a_i = b_i) \land (n_1<n_2)) $, to associate each vertex with its
minimal sequence, and sort vertices of $V_D$ according to their embedding
order. Let $r$ be the \emph{rank} function associating each vertex with its position in this
sorting: $r:V_D\to \{1,\ldots |V_D|\}$ (i.e., $r$ is the topological ordering of $D$).

The fact that subsequences can be defined recursively using a dictionary
(Lemma~\ref{lemma:dico-property-app} and Corollary~\ref{cor:recursive-app}) is
exploited by algorithm~$\FRAME$.  Concretely, we apply
Corollary~\ref{cor:recursive-app} to gradually identify the words composing a
graph sequence, from the most complex words to the least complex ones. This is
achieved by traversing the dictionary depth-first, starting from the root $C$ up
to a maximal node: algorithm~$\FRAME$ tests the nodes of $\Gamma^+(v)$ in
increasing port order as defined
above. % the decreasing topological order defined by $o$.
As a shorthand, the word $v \in V_D$ with $r(v)=i$ is written as $D[i]$;
similarly $D[i]<D[j]$ holds if $r(D[i])<r(D[j])$, a notation that will get
useful to translate the fact that $D[j]$ will be detected before $D[i]$ by
algorithm~$\FRAME$. As a consequence, the word of a sequence $S$ that gets
matched first is uniquely identified: it is $i=\arg\max_x(D[x] \mapsto S)=$
$\max\{r(v)|v\in \max_{v'\in V_D}(v' \mapsto S)\}$: $i$ denotes the maximal word
in $S$.

% \gilles{Fixme: : I will make the double max look better later. \ya{like I did? or even prettier :)?} Also I'm not sure
%   my total order trick does the trick. I guess some weird total ordering could
%   violate this assumption, just a wording issue though.}\ya{which assumption? wording issue?}

Algorithm~$\FRAME$ distinguishes whether the subsequences next to a word $v\in V_D$ are
empty ($\emptyset$) or chains ($C$), and we will refer to the subsequence before
$v$ by $\Bf$ and to the subsequence after $v$ by $\Af$. Concretely, while
recursively exploring a sequence between two already discovered parts $T_<$
and $T_>$ we check whether the maximal word $v$ is directly next to $T_<$ (i.e.,
$T_<~v,\ldots,~T_>$) or $T_>$ or both ($\emptyset$), or whether $v$ is somewhere
in the middle.  In the latter case, we add a chain ($C$) to be able to find the
greatest possible word in a next step.

$\FRAME$ uses tuples of the form $(i,j,\Bf,\Af)$ where $i,j \in \mathbb{N}^2$
and $(\Bf,\Af) \in \{\emptyset, C\}^2$, i.e., $D[i]$ denotes the maximal word in
$D$, $j$ is the number of consecutive occurrences of the corresponding word, and
$\Bf$ and $\Af$ represent the words before and after $D[i]$. These tuples are
lexicographically ordered by the total order relation $>$ on the set of possible
$(i,j,\Bf,\Af)$ tuples defined as follows: let $t=(i,j,\Bf,\Af)$ and
$t'=(i',j',\Bf',\Af')$ two such tuples. Then $t>t'$ iff $w>w'$ or $w=w' \wedge
j>j'$ or $w=w' \wedge j=j' \wedge \Bf={C} \wedge \Bf'={\emptyset}$ or $w=w'
\wedge j=j' \wedge \Bf=\Bf' \wedge \Af={C} \wedge \Af'={\emptyset}$.

With these definition we can prove that algorithm~$\FRAME$ is correct.

  % Let $H$ be the substrate line. Let $S$ be a wording of $H$ obtained by
  % recursively applying corollary \ref{cor:recursive} to $H$. In other words, $S$
  % is a sequence of integers corresponding to entries of $D$ that successively
  % write $H$, assuming $D[0]=C$.

  \begin{theorem}\label{thm:main-app}
   Given a dictionary for $\mathcal{H}$, algorithm~$\FRAME$ correctly discovers any $H\in \mathcal{H}$.
  \end{theorem}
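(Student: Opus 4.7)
The plan is to proceed by a double induction: an outer induction on the motif decomposition of $H$ (guaranteed by Lemma~\ref{lemma:equiv} and Lemma~\ref{lem:dicoExistence}, which together tell us $H$ can be written as a sequence of dictionary words glued at attachment points, with additional degree-2 nodes and chain expansions), and an inner induction on the tuple order $>$ used by $\FRAME$ to traverse the dictionary. First I would establish the following invariant: at any point during the execution, if the algorithm has already identified a partial embedding $H' \mapsto H$ whose ``2-connected skeleton'' corresponds to some subsequence $T_<\, v_{\text{known}}\, T_>$ of the motif sequence of $H$, then the currently tested tuple $(i,j,\Bf,\Af)$ is the lexicographically largest one still consistent with the unexplored region between $T_<$ and $T_>$.

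Given this invariant, the crux is to show that the tested word $D[i]$ is in fact the maximum word appearing in the unexplored region. This follows from Lemma~\ref{lemma:dico-property-app} applied to the motif sequence $S$ describing the unexplored region: since the dictionary covers $S$ and $i$ is chosen as $\arg\max_x(D[x]\mapsto S)$, $i$ must be a subsequence of $S$, and by Corollary~\ref{cor:recursive-app} the sequence $S$ decomposes as $T_1\, i\, T_2\, i\, \ldots\, i\, T_k$ with each $T_\ell\in Q_i$. The parameter $j$ in the tuple then correctly captures the maximal number of consecutive copies of $D[i]$ (so that the algorithm does not prematurely commit to too few repetitions), while the $\Bf,\Af \in \{\emptyset,C\}$ flags ensure that when a copy of $D[i]$ is not directly adjacent to the known boundary, a chain is inserted so that the recursion can continue to search for the next maximal word in each $T_\ell$. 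Robustness to composition is essential here: it prevents a sequence of smaller words from being misrecognized as containing $D[i]$, and thus guarantees that the positive answer of the oracle corresponds to a genuine occurrence of $D[i]$ in $H$.

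Once the 2-connected skeleton has been completely identified up to isomorphism, I would invoke Lemma~\ref{lem:nodesLoad-app} and Corollary~\ref{cor:embedItself-app} to argue the cleanup phase. Each motif currently in $H'$ saturates the resources of its host image, so any further virtual nodes attached to a motif via the \emph{Insert Node Rule} or via chains attached at cut vertices must correspond to substrate nodes outside the already identified part; these can therefore be discovered safely one by one by binary search along each motif edge and at each attachment point, without interfering with previously discovered structure. Combined with the fact that $\mathcal{H}$ is generated precisely by the three rules of Definition~\ref{def:motif_rules}, this yields $H' = H$ upon termination.

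The main obstacle I expect is the bookkeeping of the inductive hypothesis, specifically showing that the tuple order $>$ faithfully reflects a depth-first traversal of $D$ that is guaranteed to exhaust every maximal embeddable word in the correct order even when several copies of the same motif appear consecutively with distinct attachment-point pairs. In particular, one has to rule out the pathological case illustrated in Figure~\ref{fig:counterExAB-app}, where $M_i \not\mapsto M_j$ but $M_i \mapsto M_j^k$; the fact that any such multi-copy word is itself a dictionary vertex (by the robustness condition applied to $R_v$ and $Q_v$) is exactly what makes the ``$j$'' component of the tuple well-defined and therefore what makes the induction go through.
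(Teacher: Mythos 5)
Your plan follows essentially the same route as the paper's proof: you discover the 2-connected skeleton by always fixing the maximal embeddable dictionary word, justified by Lemma~\ref{lemma:dico-property-app}, Corollary~\ref{cor:recursive-app} and the robustness condition (the paper phrases this step as a minimal-counterexample contradiction on the largest conflicting word rather than your inductive invariant, but the substance is the same), then recover degree-two nodes via edge expansion using Lemma~\ref{lem:nodesLoad-app} and Corollary~\ref{cor:embedItself-app}, and finally glue expanded sequences at cut vertices. This mirrors the paper's three-stage argument (motif sequences, expanded sequences, combination), so the proposal is correct in approach and not a genuinely different proof.
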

  \begin{proof}
  We first prove that the claim is true if $H$ forms a motif sequence (without edge expansion). Subsequently,
we study the case where the motif sequence is expanded by Rule~2, and finally tackle the general composition
case.

\textbf{Discovery of motif sequences:} Due to
Lemma~\ref{lemma:dico-property-app} it holds that for $w$ chosen when Line 1 of
$\mathit{find\_motif\_sequence}()$ is executed for the first time, $S$ is
partitioned into three subsequences $S_1$, $w$ and $S_2$. Subsequently
$\mathit{find\_motif\_sequence}()$ is executed on each of the subsequences
$S'\in\{S_1, S_2\}$ recursively if $C\mapsto S'$, i.e., if the subsequences are
not empty. Thus $\mathit{find\_motif\_sequence}()$ computes a decomposition as
described in Corollary~\ref{cor:recursive-app} recursively. As each of the words
used in the decomposition is a subsequence of $S$ and
$\mathit{find\_motif\_sequence}()$ does not stop until no more words can be
added to any subsequence, it holds that all nodes of $S$ will be discovered
eventually. In other words, $\pi^{-1}(u)$ is defined for all $u\in S$.

As a next step we assume $S' \neq S$ to be the sequence of words obtained by
$\FRAME$ to derive a contradiction. Since $S':=H'$ is the output of
algorithm~$\FRAME$ and is hence embeddable in $H$: $S'\mapsto S$, there
exists a valid embedding mapping $\pi$. Given $u, v\in V(S)$, we denote by
$E^{\pi^{-1}}(S')$ the set of pairs $\{u,v\}$ for which
$\{\pi^{-1}(u),\pi^{-1}(v)\} \in E(S')$. Now assume that $S$ and $S'$ do not
lead to the same resource reservations ``$\pi(S) \neq \pi(S')$''. Hence there
are some inconsistencies between the substrate and the output of
algorithm~$\FRAME$: $\Phi=\{\{u,v\} \in E(S)\backslash E^{\pi^{-1}}(S') \cup
E^{\pi^{-1}}(S') \backslash E(S)\}$. With each of these ``conflict'' edges, one
can associate the corresponding word $W_{u,v}$ (resp.  $W_{u,v}'$) in $S$
(resp. $S'$). If a given conflict edge spans multiple words, we only consider
the words with the highest index as defined by $\FRAME$.  We also define
$i_{u,v}=r(W_{u,v})$ (resp.  $i_{u,v}'=r(W_{u,v}'$)). Since $S'$ and $S$ are by
definition not isomorphic, $i_{u,v}' \neq
i_{u,v}$. % and $a_{u,v}$ as the corresponding attachment
% points of $W_{u,v}$ (resp. $a_{u,v}'$ for $W_{u,v}'$ ). Since $(u,v) \in
% D$, there are two cases, either ($i$) $i_{u,v}' \neq i_{u,v}$ (and $D[i_{u,v}]
% \not \sim D[i_{u,v}']$) \ya{\textbf{$\sim$ not defined?}} if $u,v$ originate from different words, or ($ii$) if
% the motif sequences of the words are equal but the attachment points differ,
% $a_{u,v}' \neq a_{u,v}$.

Let $j = \max_{(u,v) \in \Phi}( i_{u,v})$ be the index of the greatest word
embeddable on the substrate containing an inconsistency, and $j'$ be the index
of the corresponding word detected by $\FRAME$.

($i$) Assume $j>j'$: a lower order motif was erroneously detected. Let $J^+$
(and $J^-$) be the set of dictionary
entries % of index greater (less) than $j$ (if any) that are in $S$
that are detected before (after) $D[j]$ (if any) in $S$ by
  $\FRAME$.  Observe that the words in $J^+$ were perfectly detected by
$\FRAME$, otherwise we are in Case~($ii$). We can decompose $S$ as an
alternating sequence of words of $J^+$ and other words using
Corollary~\ref{cor:recursive-app} : $S=T_1 J_1(a_1) T_2 \ldots T_k$ with
$J_i(a_i)\in(J^+)^\star$ and attachment points $a_i$ and $T_i \in
(J^-)^\star$. As the words in $J^+$ are the same in $S'$, we can write $S'=T_1'
J_1 T_2' \ldots T_k'$ (using Corollary~\ref{cor:recursive-app} as well).
  	
  Let $T$ be the sequence among $T_{1}, \ldots, T_k$ that contains our
  misdetected word $D[j]$, and $T'$ the corresponding sequence in $S'$. Observe
  that $T' \mapsto T$ since the words $J_i$ cut the sequences of $S$ and $S'$
  into subsequences $T_i, ~T_i'$ that are embeddable. Observe that $D[j] \mapsto
  T$ since $T$ contains it.  Note that in the execution of
  $\mathit{find\_motif\_sequence}()$ when $D[j']$ was detected the higher
  indexed words had been detected correctly by $\FRAME$ in previous executions
  of this subroutine. Hence, $T_<$ and $T_>$ cannot contain any words
  leading to edges in $\Phi$. Thus $(j',.,.,.)<(j,.,.,.)$ which contradicts Line
  $1$ of $\mathit{find\_motif\_sequence}()$. \stefan{Stefan - Fixme: more obvious there ?
    Maybe: we deduce that $j' \neq \arg\max_x(D[x] \mapsto T)$ since $j =
    \arg\max_x(D[x] \mapsto T)$ and $j'<j$ which contradicts Line $1$ of
    $\mathit{find\_motif\_sequence}()$. }

  ($ii$) Now assume $j'>j$: a higher order motif was erroneously detected. Using
  the same decomposition as step ($i$), we define $J'^+$ as the set of words
  perfectly detected, and therefore decompose $S$ and $S'$ as sequences
  $S=T_1J_1'T_2\ldots J_{k-1}'T_k$ and $S'=T_1'J_1'T_2'\ldots J_{k-1}'T_k'$ with
  $J_i'\in(J'^+)^\star$ and the property that each $T_i' \mapsto T_i $.

  Let $T'$ be the sequence among $T_{1}', \ldots, T_k'$ that contains our
  misdetected word $D[j']$, and $T$ the corresponding sequence in $S$.  Since
  $D[j'] \prec T'$, $D[j'] \mapsto T'$. Thus, since $T' \mapsto T$, we deduce
  $D[j']\mapsto T$ which is a contradiction with $j'$ and
  Corollary~\ref{cor:recursive-app}. \stefan{More verbose: Now recall that $T\in
    V_D^\star \backslash (J'^+)^\star$. We again consider two subcases: $D[j']
    \prec T$: $T$ contains some occurrences of the word $D[j']$, but $\FRAME$
    detected a wrong number of such occurrences. Using
    Corollary~\ref{cor:recursive-app}, we again decompose $T$ as
    $T=R_1D[j']R_2\ldots R_k$ with $R_i\in (J'^-)^\star$. Let $R$ be the
    sequence $R_{1}, \ldots, R_k$ containing $D[j]$. We have $D[j'] \mapsto R$
    and $R\in (J'^-)^\star$, which contradicts the robustness property of
    $D$. Now, consider that $T$ has no occurrences of the word $D[j']$: $T \in
    V_D^\star \backslash (J'^+ \cup \{ D[j']\})^\star$, that is $T\in
    (J'^-)^\star$ and $D[j']\mapsto T$, which again contradicts the robustness
    property of $D$.}

  The same arguments can be applied recursively to show that conflicts in
  $\phi$ of smaller indices cannot exist either.

  \textbf{Expanded motif sequences.} As a next step, we consider graphs that
  have been extended by applying node insertions (Rule 2) to motif sequences, so
  called \emph{expanded} motif sequences: we prove that if $H$ is an
  expanded motif sequence $S$, then algorithm~$\FRAME$ correctly discovers $S$.
  Given an expanded motif sequence $S$, replacing all two degree nodes with an
  edge connecting their neighbors unless a cycle of length three would be
  destroyed, leads to a unique pure motif sequence $T$, $T\mapsto S$.  For the
  corresponding embedding mapping $\pi$ it holds that $V(S)\setminus \pi(T)$ is
  exactly the set $\mathcal{R}$ of removed nodes.  Applying
  $\mathit{find\_motif\_sequence}()$ to an expanded motif sequence discovers
  this pure motif sequence $T$ by using the nodes in $\mathcal{R}$ as relay
  nodes. All nodes in $\mathcal{R}$ are then discovered in
  $\mathit{edge\_expansion}()$ where the reverse operation node insertion is
  carried out as often as possible. It follows that each node in $S$ is either
  discovered in $\mathit{find\_motif\_sequence}()$ if it occurs in a motif
    or in $\mathit{edge\_expansion}()$ otherwise.

\textbf{Combining expanded sequences.} Finally, it remains to combine the expanded sequences.
Clearly, since motifs describe all parts of the graph which are at least 2-connected, the graph
remaining after collapsing motifs cannot contain any cycles: it is a tree. However, on this graph
$\FRAME$ behaves like $\TREE$, but instead of attaching chains, entire sequences are attached to different nodes.
Along the unique sequence paths between two nodes, $\FRAME$ fixes the largest words first, and the claim follows
by the same arguments as used in the proofs for tree and cactus graphs.
% Since motif sequences are simply graphs as well, Lemma~\ref{lemma:poset} still applies,
% and thus, the order structure is a \emph{poset}.
% \begin{lemma}\label{lem:orderextension}
% A motif order can be \emph{extended} by sequences of its motifs such that in the resulting ranking for any motif with rank $i$ that can be embedded in a motif sequence consisting of motifs with ranks $j<i$, this motif sequence is present in the extended motif order with a rank greater than $i$.
% \end{lemma}
% \begin{proof}
% TODO
% ~\hfill $\Box$\hfill$\Box$ \end{proof}
  \hfill$\Box$ \end{proof}
\begin{footnotesize}
\vspace{-1.0\baselineskip}
\begin{algorithm}[h!]
\small
\caption{Motif Graph Discovery $\FRAME$}
   \label{alg:motifRec-app}
    \begin{algorithmic}[1]
        \STATE $H':=\{\{v\},\emptyset\}$ ~/*current request graph*/,~~~~~
        $\mathcal{P}: =\{v\}$ ~/*set of unexplored nodes*/
        \WHILE {$\mathcal P\neq \emptyset$}
        \STATE choose $v\in \mathcal P$, $T: =\mathit{find\_motif\_sequence}(v,\emptyset,\emptyset)$
        \STATE \textbf{if } ($T\neq\emptyset$) \textbf{then} $H':=H'vT$, add all nodes of $T$ to $\mathcal{P}$,
        \textbf{for all} {$e\in T$} \textbf{do} \emph{edgeExpansion}($e$)
        \STATE \textbf{else} remove $v$ from $\mathcal{P}$
        \ENDWHILE
    \end{algorithmic}
\vspace{.1cm}
\emph{find\_motif\_sequence}($v,T_<,T_>$)
   \begin{algorithmic}[1]
  \STATE  find maximal
  $i,j,\Bf,\Af$ s.t.~$H'v~(T_<)~\Bf~(D[i])^j~\Af~(T_>)$ \label{line:potenz}
  $\mapsto H$ where $\Bf,\Af \in \{\emptyset,C\}^2$ ~~ /* issue requests */
   \STATE \textbf{if} ($(i,j,\Bf,\Af)=(0,0,C,\emptyset) $) \textbf{then return} $T_<CT_>$
   \STATE \textbf{if } ($\Bf=C$) \textbf{then} $\Bf=\mathit{find\_motif\_sequence}(v,T_<,(D[i])^j~\Af~T_>)$
   \STATE \textbf{if } ($\Af=C$) \textbf{then}  $\Af=\mathit{find\_motif\_sequence}(v,T_<~\Bf~(D[i])^j,T_>)$
   \STATE \textbf{return} $\Bf~(D[i])^j~\Af$
   \end{algorithmic}
   \vspace{.1cm}
\emph{edge\_expansion}($e$)
    \begin{algorithmic}[1]
    \STATE let $u,v$ be the endpoints of edge $e$, remove $e$ from $H'$
    \STATE find maximal $j$ s.t. $H'vC^ju \mapsto H$ ~~ /* issue requests */
    \STATE $H':=H'vC^ju$, add newly discovered nodes to $\mathcal{P}$
    \end{algorithmic}
 \end{algorithm}
\vspace{-2.5\baselineskip}
\end{footnotesize}

\subsection{Request Complexity}

The focus of $\FRAME$ is on generality rather than performance, and indeed,
the resulting request complexities can often be
high. However, as we will see, there are interesting graph classes which can be solved efficiently.

Let us start with a general complexity analysis. The requests issued by $\FRAME$ are constructed in Line 1 of $finding\_motif\_sequence()$ and in Line 2 of $edge\_expansion()$. We will show that the request complexity of the latter is linear in the number of edges of the host graph while the request complexity of $finding\_motif\_sequence()$ depends on the structure of the dictionary. %A naive implementation of finding the maximal values for $i$ and $j$ might issue requests for all dictionary entries, starting with the smallest. This can generate a large number of unneccessary requests since a word $D[i]$ that is not embeddable in $H$ implies that all $D[i']$ where $D[i]\mapsto D[i']$ are not embeddable in $H$ either. On the other hand a binary search strategy is only successful if $D[i]\mapsto D[i']$ for all $i<i'$.
%A more efficient method exploits the edges of the dictionary.
Essentially, an efficient implementation of Line 1 of $finding\_motif\_sequence$ in $\FRAME$ can be seen as the depth-first exploration of the dictionary $D$ starting from the chain $C$. More precisely, at a dictionary word $v$ requests are issued to see if one of the outgoing neighbors of $v$ could be embedded at the position of $v$. As soon as one of the replies is positive, we follow the corresponding edge and continue recursively from there, until no outgoing neighbors can be embedded. Thus, the number of requests issued before we reach a vertex $v$ can be determined easily.

Recall that $\FRAME$ tests vertices of a dictionary $D$ according to a fixed
port labeling scheme. For any $v\in V_D$, let $p(C,v)$ be the set of paths
from $C$ to $v$ (each path including $C$ and $v$). In the worst case, discovering $v$ costs $cost(v)=\max_{p\in
  p(C,v)}( \sum_{u \in p}|\Gamma^+(u)|)$. % Let $v_i =\arg\max_{v\in
  % V_D}(cost(v))$ be the most costly vertex to reach.

\begin{lemma}\label{lem:weight} The request complexity of Line 1 of
  $find\_motif\_sequence(v',T_<,T_>)$ to find the maximal
  $i,j,\Bf,\Af$ such that~$H'v'~(T_<)~\Bf~(D[i])^j~\Af~(T_>) \mapsto H$ where $\Bf,\Af \in \{\emptyset,C\}^2$ and $H'$ is the current request graph is $O(\max_{v\in V_D} cost(v)+j)$.%  for the largest embeddable
  % dictionary element $v_i$, where $w(v_i)$ equals the sum of the number of
  % outgoing neighbors of nodes on the depths-first path $p_i$ in $D$ to
  % $v_i$. $$w(v_i) := |\Gamma^+(v_i)| + \sum_{u \in p_i}|\Gamma^+(u)|.$$
\end{lemma}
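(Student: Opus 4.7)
The plan is to bound the request complexity by analyzing Line 1 of $find\_motif\_sequence$ as a depth-first traversal of the dictionary DAG $D$, plus a short linear search for the multiplicity $j$, plus a constant-sized case analysis over $(\Bf,\Af)\in\{\emptyset,C\}^2$.

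First I would observe that the maximal embeddable word $D[i]$ can be found by walking from the root $C$ down along $D$: starting at the current word $v$ (initially $C$), $\FRAME$ tests the outgoing neighbors of $v$ in the fixed port order of the DAG, asking for each candidate $w\in\Gamma^+(v)$ a single embedding query of the form $H'v'(T_<)\,\Bf\,w\,\Af\,(T_>)\mapsto H$. By the transitive-reduction definition of $E_D$ and the monotonicity of $\mapsto$, as soon as some $w$ is answered positively we can commit to $w$ and recurse; if none is, $v$ itself is the maximal word. This procedure terminates at the maximal $D[i]$ and issues at most $\sum_{u\in p}|\Gamma^+(u)|$ requests, where $p$ is the traversed root-to-$D[i]$ path. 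Taking the worst path in $D$ gives the bound $\max_{v\in V_D}cost(v)$ for this phase.

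Next I would handle the multiplicity: once $D[i]$ is fixed, we determine the largest $j$ such that $(D[i])^j$ is embeddable at the given position (with appropriate $\Bf,\Af$) by incrementing $j$ and asking one query per increment until the provider refuses. This clearly uses $O(j)$ requests (an exponential or binary search would even give $O(\log j)$, but the claim asks only for linear). Finally the outer choice $(\Bf,\Af)\in\{\emptyset,C\}^2$ is just four cases, so repeating the above for each combination and returning the lexicographically maximal tuple multiplies the cost only by a constant. Summing the three contributions yields $O(\max_{v\in V_D}cost(v)+j)$, as required.

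The main obstacle is the first step: justifying that a greedy DFS down $D$ really finds the maximal embeddable word with only $cost(v)$ requests, rather than needing to re-examine previously rejected branches. This rests on two facts that are already available in the paper. The first is that the dictionary edges encode the embedding poset $\mapsto$, so if $w\in\Gamma^+(v)$ and $w$ does not embed, then by transitivity no descendant of $w$ embeds either, meaning it is safe to prune. The second is Lemma~\ref{lemma:dico-property-app}, which guarantees uniqueness of the maximal word in the sense relevant here: the maximum under $>$ on tuples $(i,j,\Bf,\Af)$ is well-defined and coincides with the word reached by the greedy descent in topological/port order. With these two observations the counting argument above goes through and the lemma follows.
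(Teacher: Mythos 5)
Your proposal is correct and follows essentially the same route as the paper's own proof: a depth-first descent of the dictionary from the root $C$, charging one request per outgoing neighbor tested along the traversed path (bounded by $\max_{v\in V_D}cost(v)$), followed by a linear scan of roughly $j+1$ requests to determine the multiplicity, with the $(\Bf,\Af)$ cases contributing only a constant factor. Your added justification that a rejected neighbor's descendants can be pruned (via transitivity of $\mapsto$ in the chain context) is a detail the paper leaves implicit but does not change the argument.
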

\begin{proof}
  To reach a word $v=D[i]$ in $V_D$ with depth-first traversal there is
  exactly one path between the chain $C$ and $v$.  $\FRAME$ issues a request
  for at most all the outgoing neighbors of the nodes this path. After $v$ has
  been found, the highest $j$ where $H'v~(T_<)~\Bf~(v^j)~\Af~(T_>)\mapsto H$
  has to be determined. To this end, another $j+1$ requests are necessary.
  Thus the maximum of $cost(v)+j$ over all word $v\in V_D$ determines the
  request complexity.
  \hfill$\Box$ \end{proof}

When additional nodes are discovered by a positive reply to an embedding
request, then the request complexity between this and the last previous
positive reply can be amortized among the newly discovered nodes. Let $num\_nodes(v)$ denote the number of nodes in the
motif sequence of the node $v$ in the dictionary.

\begin{theorem}\label{thm:runtime}
 The request complexity of algorithm~$\FRAME$ is at most $O(n\cdot  \Delta+m)$, where $m$ denotes the number of
 edges of the inferred graph $H\in \mathcal{H}$, and $\Delta$ is the maximal
 ratio between the cost of discovering a word $v$  in $D$ and $num\_nodes(v)$, i.e., $ \Delta = \max_{v\in V_D}\{cost(v)/num\_nodes(v)\}$.
\end{theorem}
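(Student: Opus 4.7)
The plan is to split the request budget of $\FRAME$ into the two subroutines that issue queries, $\mathit{find\_motif\_sequence}()$ (Line~1) and $\mathit{edge\_expansion}()$ (Line~2), bound them separately, and then combine. The guiding idea is an amortized analysis in which each query is charged either to a node that is newly revealed by a positive reply, or to an edge of $H$ whose exploration terminates with a negative reply. Under this accounting, the per-node charge for motif discovery is at most $\Delta$ (by the definition of $\Delta$), and the per-edge charge for expansion is $O(1)$, yielding the claimed $O(n\Delta+m)$ bound.

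For $\mathit{edge\_expansion}$, the routine is called once per edge $e$ of the motif skeleton discovered so far. For a fixed edge, the loop that searches for the maximal $j$ with $H'vC^ju\mapsto H$ can be implemented by issuing requests with increasing $j$ until a negative reply arrives. Every positive reply reveals at least one new degree-two relay node, and exactly one negative reply terminates the call. Summed over all edges, the positive replies contribute at most the number of relay nodes added, which is $O(n)$, while the negative replies contribute at most one per edge, i.e.\ $O(m)$. Since $\FRAME$ issues edge-expansion only along edges of $H$, this phase uses $O(n+m)=O(m)$ requests (the graph being connected).

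For $\mathit{find\_motif\_sequence}$, I invoke Lemma~\ref{lem:weight}: one invocation that identifies the maximal word $D[i]$ with multiplicity $j$ costs $O(cost(D[i]) + j)$ requests. Such a successful invocation adds at least $j\cdot num\_nodes(D[i])$ new nodes to $H'$ (the interior nodes of the $j$ copies of the motif, which are fresh since the dictionary traversal is depth-first and higher-ranked words were already rejected). Splitting the cost: the $cost(D[i])$ term is charged to one of the $num\_nodes(D[i])$ new nodes discovered, contributing at most $cost(D[i])/num\_nodes(D[i])\le\Delta$ per node; the $j$ term is charged to the $j$ motif copies, contributing $O(1)$ per copy and thus $O(1)$ per newly discovered node. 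The recursive calls that $\mathit{find\_motif\_sequence}$ triggers (when $\Bf=C$ or $\Af=C$) themselves either succeed and add new nodes (absorbed into the same amortization) or bottom out with the base case $(0,0,C,\emptyset)$ on Line~2, in which case the cost is absorbed by charging the base-case requests to the parent successful call that spawned the chain expansion. Aggregating over all at-most-$n$ newly discovered motif nodes yields $O(n\Delta)$ requests.

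The main obstacle I anticipate is accounting for calls of $\mathit{find\_motif\_sequence}$ that return nothing, that is, calls issued from an exhausted node $v\in\mathcal{P}$ whose only outcome is to remove $v$. Each node $v$ triggers at most one such terminal call, contributing $O(\max_{u\in V_D}cost(u))$ queries; since $\max_u cost(u)\le \Delta\cdot\max_u num\_nodes(u)$ and the dictionary is fixed (so $\max_u num\_nodes(u)$ is a family constant absorbed into the $O(\cdot)$), this is $O(n\Delta)$ overall. Combining the two phases gives the total request complexity $O(n\Delta+m)$, completing the argument.
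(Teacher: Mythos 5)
Your overall accounting mirrors the paper's: queries from $\mathit{find\_motif\_sequence}()$ are amortized over newly discovered nodes at cost at most $\Delta$ per node (via Lemma~\ref{lem:weight}), plus $O(1)$ per node for the repetition count $j$, and queries from $\mathit{edge\_expansion}()$ are amortized over edges of $H$, giving $O(n\Delta+m)$. The gap is in your last paragraph, the treatment of the terminal (unsuccessful) calls made from the main loop before a node is removed from $\mathcal{P}$. You charge such a call $O(\max_{u\in V_D} cost(u))$ and then reduce this to $O(\Delta)$ per node by asserting that $\max_{u\in V_D} num\_nodes(u)$ is a constant of the graph family. That assertion is unjustified and false in general: dictionary words can have a number of nodes growing with $n$ (the dictionary of Lemma~\ref{lem:dicoExistence} contains words with up to $n$ nodes, and in the generalized block graph example the $i$-th word has $\lfloor i/3\rfloor+3$ nodes, so $\max_u num\_nodes(u)=\Theta(n)$ while $\Delta=O(1)$). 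With your accounting the $n$ terminal calls would then contribute up to $O(n\cdot\max_u cost(u))$, which can exceed $O(n\Delta+m)$, so the claimed bound does not follow as written.

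The repair is the observation the paper makes: an unsuccessful call costs only \emph{one} request, not a full dictionary traversal. The chain $C$ is the unique root of the dictionary and $C\mapsto w$ for every word $w$; hence if even $H'vC\,T_<T_>\not\mapsto H$, transitivity of $\mapsto$ (Lemma~\ref{lemma:poset}) implies no word can be embedded at that position, and the depth-first exploration stops after this single negative reply. Thus the at most one terminal call per node contributes $O(n)$ requests in total (and the recursive calls with $\Bf=C$ or $\Af=C$ are never of this kind, since the base case in Line~2 still returns a newly discovered chain node, so they are covered by your per-node amortization). With this fix your argument matches the paper's proof.
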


\begin{proof}
Each time Line~\ref{line:potenz} of $find\_motif\_sequence()$ is called, either
at least one new node is found or no other node can be embedded between
the current sequences (one request is necessary for the latter result). If
one or more new nodes are discovered, the request complexity can be amortized
by the number of nodes found: If $v$ is the maximal word found in Line 1 of
$find\_motif\_sequence()$ then it is responsible for at most $cost(v)$ requests due to
Lemma~\ref{lem:weight}. If it occurs more than once at this position, only
one additional request is necessary to discover even more nodes (plus one
superfluous request if no more occurrences of $v$ can be embedded there).
Amortizing the request number over the number of discovered nodes results in
$\Delta$ requests.
All other requests are due to $\mathit{edge\_expansion}(e)$  where additional
nodes are placed along edges. Clearly, these costs can be amortized by the
number of edges in $H$: for each edge $e\in E(H)$, at most two embedding
requests are performed (including
a ``superfluous'' request which is needed for termination when no additional
nodes can be added).
\hfill$\Box$ \end{proof}

\subsection{Examples}

Let us consider concrete examples to provide some intuition for Theorem~\ref{thm:main-app} and Theorem~\ref{thm:runtime}.
\stefan{If we ignore the attachment points, then executing DICT for the graph in Figure 2\emph{b}, leads to the following request sequence.
Chains, cycles, diamonds, the complete bipartite graph over two times three nodes and the complete graph over five nodes are denoted by $C$, $Y$, $D$ and $B$ and $K$ respectively.\\
$\{$ 1. $C$, 2. $Y$, 3. $D$, 4. $B$, 5. $K$ (negative reply), 6. $BB$ (negative reply), 7. $C K C$ (recursion in Line~3), 8. $Y K C$ (negative reply), 9. $Y C  K  C$, 10. $D C K C$, 11. $D^2 C K C$  (negative reply), 12. $C D C K C$, $\ldots\}$\\
}
The execution of $\FRAME$ for the graph in Figure~\ref{fig:dictionary}.b), is illustrated in Figure~\ref{fig:motiftree}.
\begin{figure}[h]
	\centering
		\includegraphics[width=0.40\columnwidth]{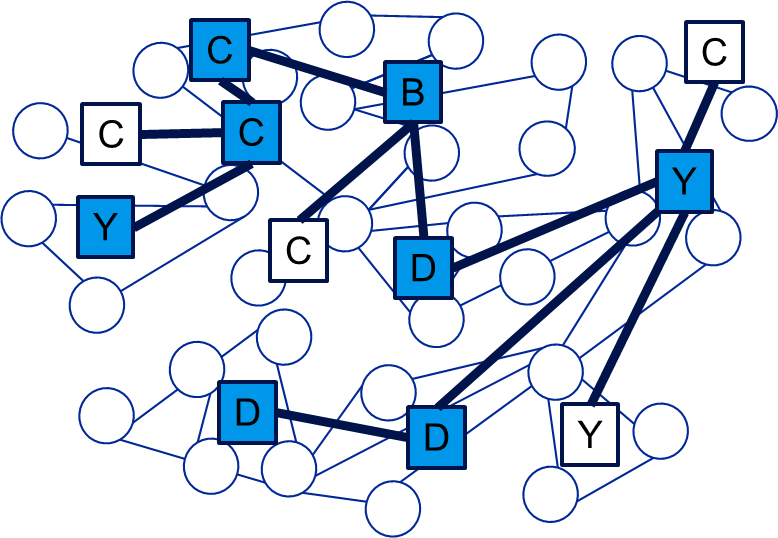}
\vspace{-0.5\baselineskip}	\caption{Motif sequence tree of the graph in Figure~\ref{fig:dictionary} \emph{b}). The squares and the edges between them depict the motif composition, the shaded squares belong to the motif sequence  $YC^2BDYD^2$ discovered in the first execution of $find\_motif\_sequence()$ (chains, cycles, diamonds, and the complete bipartite graph over two times three nodes are denoted by $C$, $Y$, $D$ and $B$ respectively). Subsequently, the found edges are expanded before calling $find\_motif\_sequence()$ another four times to find $Y$ and three times $C$.}
	\label{fig:motiftree}
\vspace{-1.0\baselineskip}
\end{figure}

A fundamental graph class are \emph{trees}. Since, the tree
does not contain any 2-connected structures, it can be described by a
single motif: the chain $C$. Indeed, if $\FRAME$ is executed with a dictionary
consisting in the singleton motif set $\{C\}$, it is equivalent to a recursive version of $\TREE$ from~\cite{minicom13} and seeks to compute maximal paths. For the cactus graph, we have two motifs, the request complexity is the same as for the algorithm described in~\cite{minicom13}.

\begin{corollary}
Trees can be described by one motif (the chain $C$), and cactus graphs by two motifs (the chain $C$ and the cycle $Y$). The request complexity of $\FRAME$ on trees and cactus graphs is $O(n)$.
\end{corollary}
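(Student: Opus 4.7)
My plan is to first verify the two claims about the motif sets using Definition~\ref{def:motif-app}, and then plug the resulting dictionaries into Theorem~\ref{thm:runtime}.

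For trees, every edge is a cut of size one, so by Definition~\ref{def:motif-app} the chain $C$ is a motif. Moreover, a tree contains no $2$-connected subgraph with more than two nodes, so repeatedly removing degree-$\leq 2$ nodes collapses the whole graph and no further (non-chain) motifs arise. Hence $\mathcal{M}_{\TREE}=\{C\}$. For cactus graphs, every $2$-connected component is a simple cycle. After repeatedly deleting degree-$\leq 2$ nodes and merging their incident edges inside such a component, we obtain either a single vertex (from a triangle or larger cycle once all degree-$2$ nodes are stripped) or the collapse process is blocked by the no-parallel-edges clause at a cycle of length exactly three; in either case the only non-trivial minimal $2$-connected graph that appears is the cycle $Y$. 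Together with the chains glued between cycles, this yields $\mathcal{M}_{\CACTUS}=\{C,Y\}$.

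Next I will argue that the corresponding dictionaries can be chosen as tiny DAGs: for trees, $V_D=\{C\}$ is trivially a valid, robustness-compliant dictionary (there are no ``smaller'' words to compose into); for cacti, $V_D=\{C,Y\}$ with a single edge $C\to Y$ is a dictionary because $Y\not\mapsto C^\star$ (a cycle cannot be embedded into a chain graph without parallel-edge collisions). In both cases $|V_D|$ is a constant, every vertex has constant out-degree, and hence $cost(v)=O(1)$ and $\mathit{num\_nodes}(v)=O(1)$ for every $v\in V_D$, giving $\Delta=O(1)$.

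Finally, I will invoke Theorem~\ref{thm:runtime} to conclude a request complexity of $O(n\cdot\Delta+m)=O(n+m)$. For trees, $m=n-1$, so this is $O(n)$. For cactus graphs, a standard counting argument (each biconnected block is a single cycle, and the block-cut tree has $O(n)$ edges in total) gives $m\leq \lfloor 3(n-1)/2\rfloor$, again $O(n)$, and the claim follows.

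The only subtle point I expect to spend effort on is the cactus case: I need to check carefully that the motif-extraction procedure of Definition~\ref{def:motif-app}, together with the clause ``as long as all remaining cycles do not contain parallel edges,'' really leaves the cycle $Y$ as the unique non-chain motif irrespective of the lengths of the cycles occurring in the input cactus, and that the resulting two-node DAG $(C\to Y)$ satisfies the robustness-to-composition condition. Everything else is a direct substitution into Theorem~\ref{thm:runtime}.
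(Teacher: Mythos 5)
Your proposal is correct and follows essentially the same route as the paper: identify the motif sets $\{C\}$ and $\{C,Y\}$, note that the resulting dictionary has constant $\Delta$, and conclude via Theorem~\ref{thm:runtime} together with $m\in O(n)$ for cactus graphs. The paper is merely terser, deriving the cycle motif from the absence of diamond minors (at most two paths between any node pair) and invoking planarity for the edge bound, whereas you verify the dictionary's robustness and the edge count explicitly.
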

\begin{proof}
We present the arguments for cactus graphs only, as trees constitute a subset of the cactus family. The absence of diamond graph minors implies that a cactus graph does not
contain two closed faces which share a link. Thus, there can exist at most two
different (not even disjoint) paths between any node pair, and the
corresponding motif subgraph forms a \emph{cycle} $Y$ (or a triangle). Since the cycle has only one attachment point pair, $\Delta$ of $D$ is constant. Consequently, a linear request complexity follows directly from Theorem~\ref{thm:runtime} due to the planarity of cactus graphs  (i.e., $m\in O(n)$).
\hfill$\Box$ \end{proof}

%The two first entries of the dictionary in Figure~\ref{fig:dictionary}.a) form the dictionary of cactus graphs.

An example where the dictionary is
efficient although the connectivity of the topology can be high are \emph{block graphs}. A block graph is an undirected graph in which
every bi-connected component (a \emph{block}) is a clique. A \emph{generalized
block graph} is a block graph where the edges of the cliques can contain
additional nodes. In other words, in the terminology of our framework, the
motifs of generalized block graphs are \emph{cliques}. For instance, cactus
graphs are generalized block graphs where the maximal clique size is three.

\begin{corollary}
Generalized block graphs can be described by the motif set of cliques. The request complexity of $\FRAME$ on generalized block graphs is $O(m)$, where $m$ denotes the number of edges in the host graph.
\end{corollary}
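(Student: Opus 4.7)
The plan is to proceed in three steps: identify the motif set of generalized block graphs, describe a compact dictionary with small $\Delta$, and then invoke Theorem~\ref{thm:runtime}.

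First, I would verify that the motif set is exactly $\{C\}\cup\{K_k : k\geq 3\}$ using Definition~\ref{def:motif-app}. Starting from any generalized block graph and iteratively removing nodes of degree at most two collapses each subdivided clique-edge back to a single edge, leaving the underlying ordinary block graph. Its at-least-$2$-connected components are precisely the non-trivial cliques, which become the non-chain motifs. The chain $C$ is also a motif since generalized block graphs contain edge cuts of size one (for example, any subdivided bridge between two blocks, or any $K_2$ block).

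Next, I would specify a dictionary whose word set is exactly these motifs, ordered as a single path $C\to K_3\to K_4\to K_5\to\cdots$. Because $K_i\mapsto K_j$ iff $j\geq i$ and no clique of intermediate size exists between $K_{k-1}$ and $K_k$, this is indeed the transitive reduction of the embedding relation, and the symmetry of cliques means a single attachment-point class suffices at every word. The delicate step is to verify \emph{robustness}, namely $K_k\not\mapsto W$ for every $W\in\overline{R}_{K_k}^\star\setminus R_{K_k}^\star$, i.e., for every graph obtained by gluing strictly smaller cliques together. This is the main technical obstacle, and I would address it by a Menger-style counting argument: any articulation point $x$ in such a glued host has at most $k-2$ incident edges on either side of the vertex cut it induces, whereas an embedding of $K_k$ that places $p\geq 1$ image vertices on one side of $x$ and $q\geq 1$ on the other must route $pq$ pairwise edge-disjoint cross-paths through that cut (plus, if $x$ itself hosts an image vertex, an additional $p$ and $q$ paths on the respective sides). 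Since $p+q\geq k-1$ forces $pq\geq k-2$ and the additional paths push the required cut capacity past $k-2$, a short inequality shows the edge budget is always exceeded, ruling out the embedding.

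Finally, I would plug into Theorem~\ref{thm:runtime}. In the proposed dictionary every word has outdegree at most one, so the unique root-to-$K_k$ path gives $cost(K_k)\leq k-1$; since $num\_nodes(K_k)=k$, we obtain $\Delta=\max_{v\in V_D} cost(v)/num\_nodes(v)=O(1)$. Theorem~\ref{thm:runtime} then yields a request complexity of $O(n\cdot\Delta+m)=O(n+m)=O(m)$, using that $n-1\leq m$ for any connected host graph.
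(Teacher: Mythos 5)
Your proposal follows essentially the same route as the paper's proof: identify the motif set of generalized block graphs as the cliques (plus the chain), build a dictionary over cliques with constantly many attachment-point classes so that $\Delta=\max_{v\in V_D}\{cost(v)/num\_nodes(v)\}=O(1)$, and conclude $O(n\cdot\Delta+m)=O(m)$ via Theorem~\ref{thm:runtime}. You actually go further than the paper on one point: the paper merely asserts that a clique on $k$ nodes cannot be embedded in sequences of smaller cliques, whereas you sketch a cut-vertex counting argument for this robustness property; that argument is sound (an embedding split as $p$/$q$ across an articulation point of a sequence of cliques of size at most $k-1$ needs at least $pq+p>k-2$ edge-disjoint path-ends on one side of the cut vertex while only $k-2$ edges are available there, and an unsplit embedding fails by node counting under unit capacities).

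The one inaccuracy is the claim that ``a single attachment-point class suffices at every word.'' Dictionary words are glued at nodes \emph{or} edges (an edge attachment point is realized by expanding that edge with the Insert Node Rule), and in a generalized block graph a block may attach at a subdivision node lying on a clique edge; to cover such hosts the dictionary needs, for each clique, the node--node, node--edge and edge--edge attachment-point pairs. This is precisely why the paper's dictionary contains three entries for every clique motif with more than three nodes, giving the $i$-th entry $\lfloor i/3\rfloor+3$ nodes and $cost(D[i])<3(i+2)$. Your single-path dictionary $C\to K_3\to K_4\to\cdots$ with one attachment class per clique would fail to cover (and hence to correctly discover) hosts with attachments at subdivision nodes. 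The fix is only a constant-factor enlargement of the dictionary, so $\Delta$ remains $O(1)$ and your final $O(n+m)=O(m)$ bound goes through exactly as you state.
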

\begin{proof}
The framework dictionary for generalized block graphs consists of the
set of cliques, as a clique with $k$ nodes cannot be embedded on sequences of
cliques with less than $k$ nodes. As there are three attachment point pairs for each complete graph with four or more nodes,
$\FRAME$ can
be applied using a dictionary that contains three entries for each motif with more than three nodes ($num\_nodes()>3$). Thus, the $i^{th}$ dictionary entry has $\lfloor i/3 \rfloor+3$ nodes for $i>1$ and $cost(D[i]) < 3(i+2)$ and $\Delta$ of $D$ is hence in $O(1)$. Consequently the complexity for generalized block graphs is $O(m)$ due to Theorem~\ref{thm:runtime}.
\hfill$\Box$ \end{proof}

On the other hand, Theorem~\ref{thm:runtime} also states that highly connected graphs may require $\Omega(n^2)$ requests, even if the dictionary is small. In the next section, we will study whether this happens in ``real world graphs''.

\begin{figure}[b!]
	\centering
a)\includegraphics[width=0.42\textwidth]{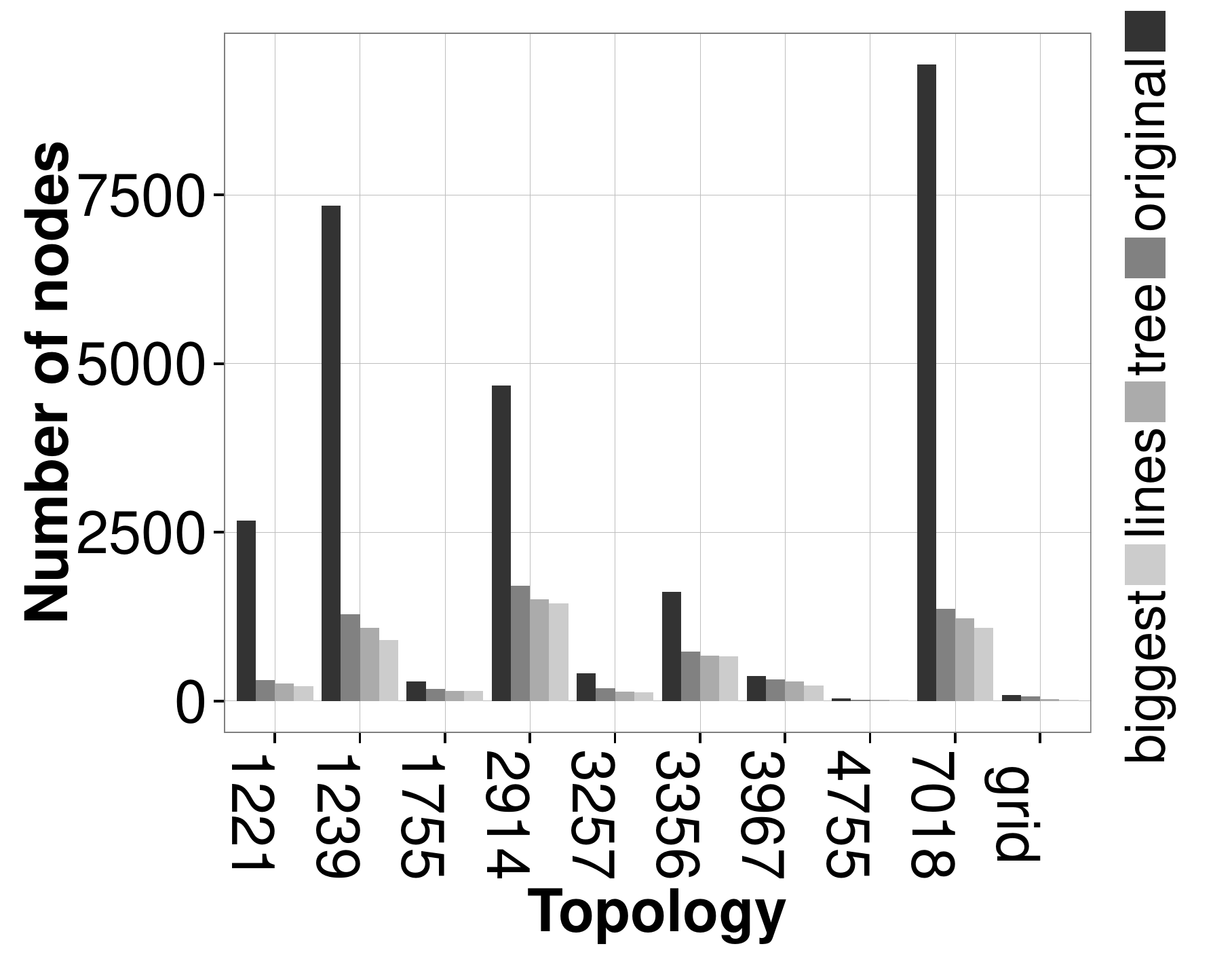}
b)\includegraphics[width=0.42 \textwidth]{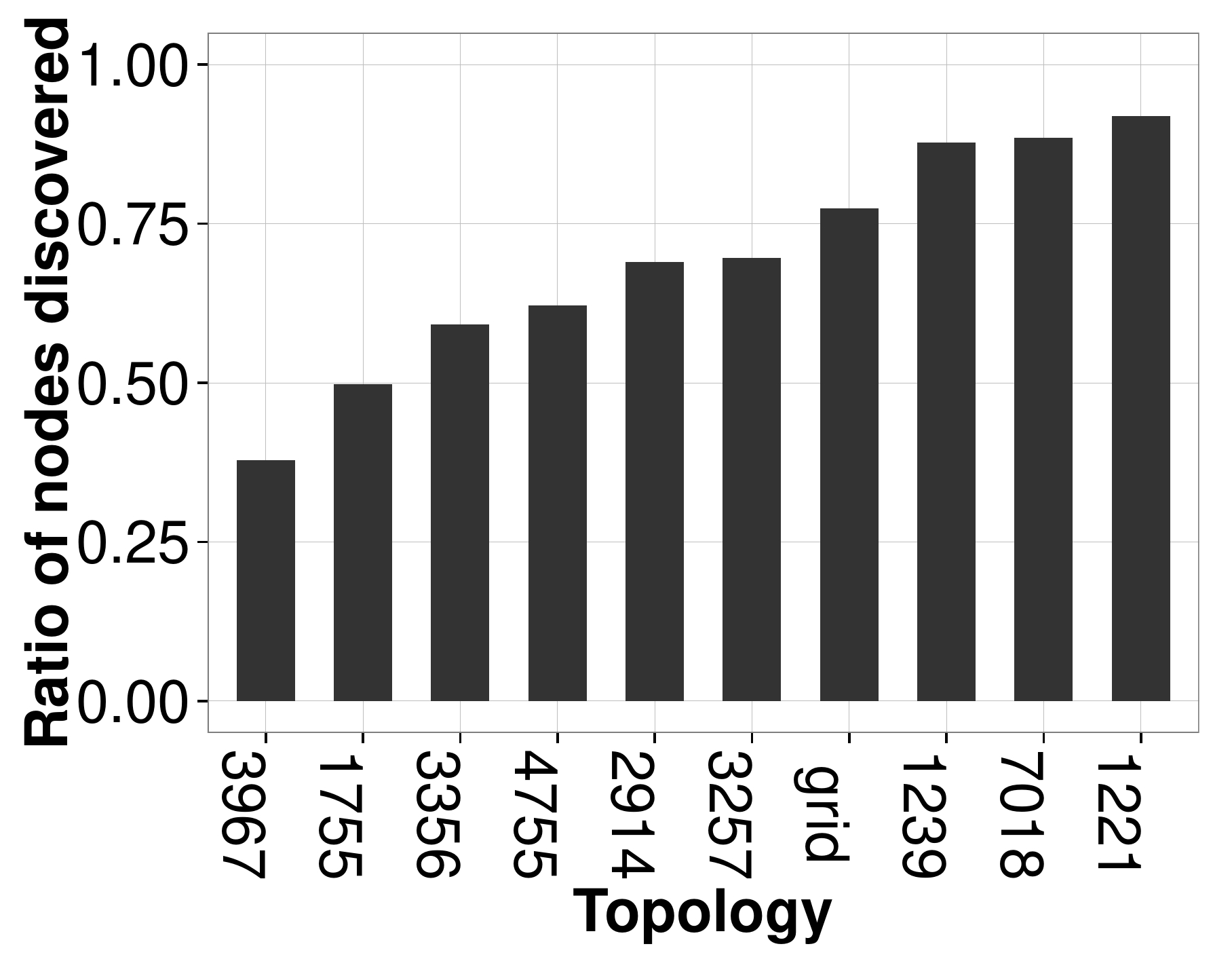}
c)\includegraphics[width=0.4\textwidth]{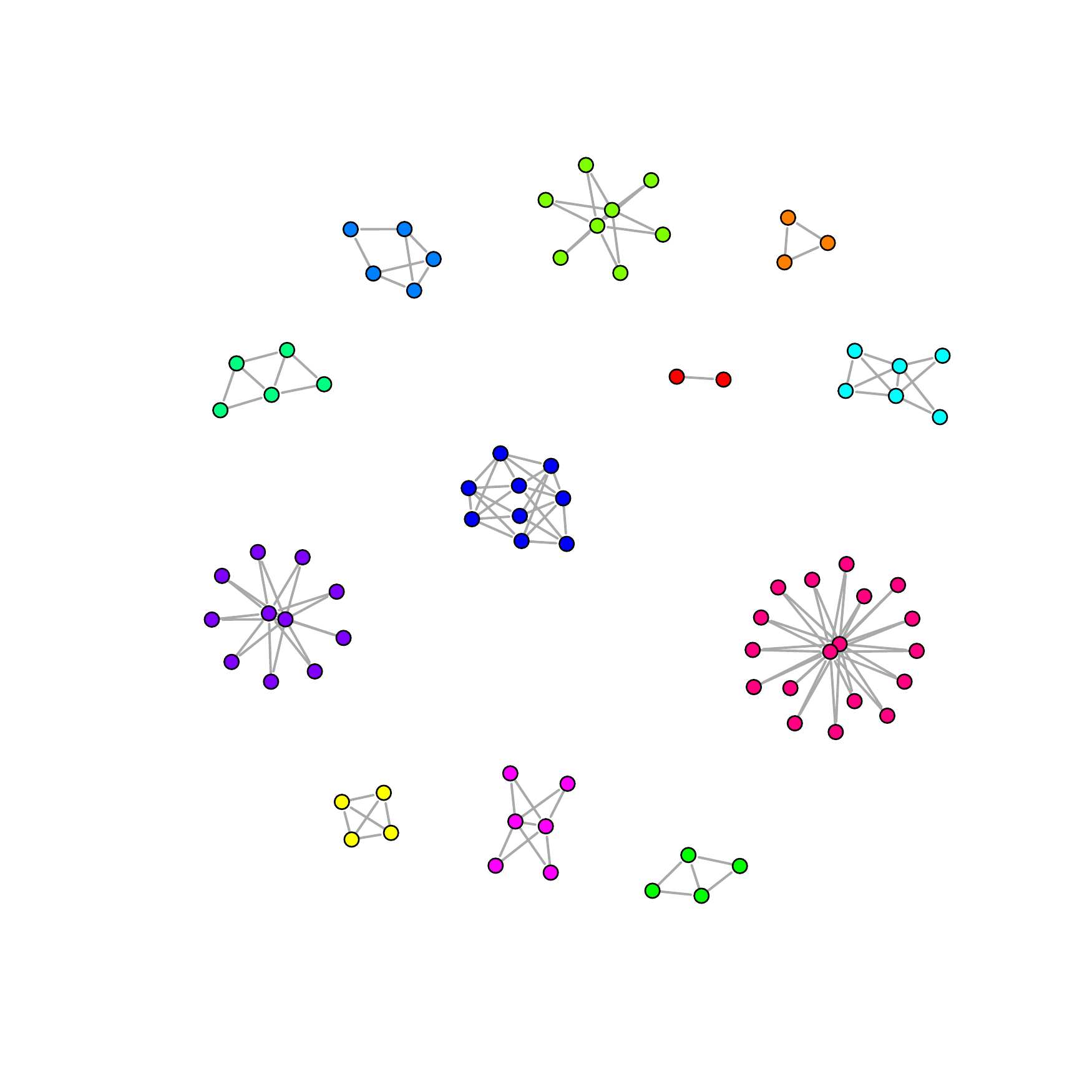}
d)\includegraphics[width=0.4 \textwidth]{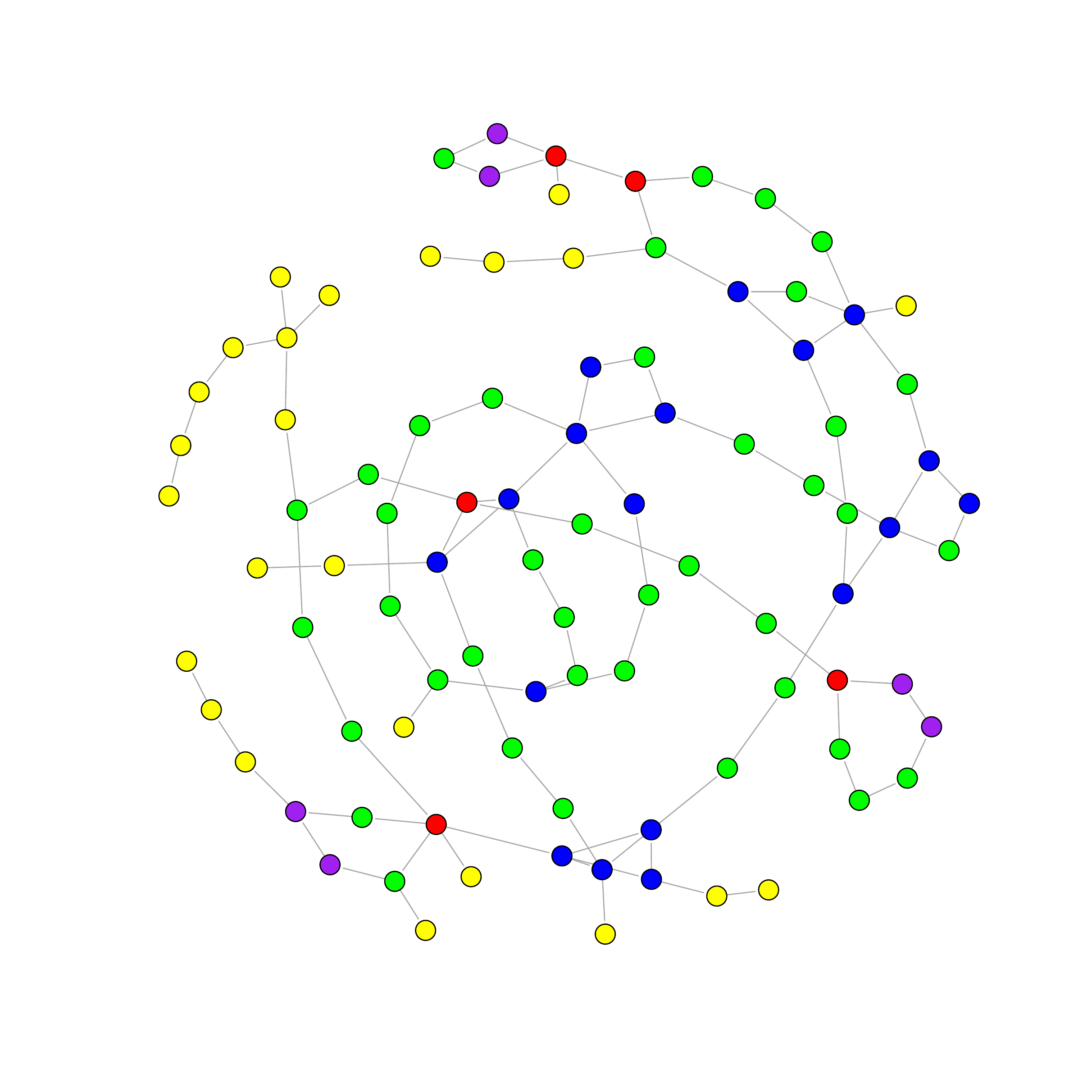}
\vspace{-0.5\baselineskip}	\caption{Results of $\FRAME$ when run on different Internet and power grid topologies. a)
Number of nodes in different autonomous systems (AS). We
computed the set of motifs of these graphs as described in
Definition~\ref{def:motif-app} and counted the number of nodes that: (i) belong
to a tree structure at the fringe of the network, (ii) have degree 2 and belong
to two-connected motifs, and finally (iii) are part of the largest
motif.  b) The fraction of nodes that can be discovered with 12-motif 
dictionary represented in Figure c). d) An example network where tree nodes 
are colored yellow, line-nodes are green, attachment point nodes are red and 
the remaining nodes blue.}
	\label{fig:exp}
\vspace{-1.0\baselineskip}
\end{figure}

\section{Experiments}\label{sec:experiments}
To complement our theoretical results and to validate our framework on
realistic graphs, we dissected the \emph{ISP topologies} provided by the
Rocketfuel mapping engine\footnote{See
  \url{http://www.cs.washington.edu/research/networking/rocketfuel/}.}.
In addition, we also dissected the topology of a European electricity distribution grid
  (\texttt{grid} on the legends). Figure~\ref{fig:exp} a) provides some
  statistics about the aforementioned topologies. Since $\FRAME$ discovers 
  both tree and degree 2 nodes in linear time, this figure shows that most of 
  each topology can be discovered quickly.
The inspected topologies are composed of a large bi-connected component (the
  largest motif), and some other small and simple motifs. Figure~\ref{fig:exp}~b) represents the fraction of each topology that can be discovered by $\FRAME$
  using only a 12-motifs dictionary (see~Figure~\ref{fig:exp}~c)). Interestingly, this small dictionary is efficient on
  $10$ different topologies, and contains motifs that are mostly
  symmetrical. This might stem from the man-engineered origin of the targeted
  topologies. Finally, Figure~\ref{fig:exp} d) provides an example of such
  a topology.

\begin{tiny}
\bibliographystyle{abbrv}

\end{tiny}
\newpage

\appendix
\section{Appendix}

\textbf{Lemma~\ref{lemma:poset}.}\emph{
The embedding relation $\mapsto$ applied to any family $\mathcal{G}$ of undirected graphs (short: $(\mathcal{G},\mapsto)$), forms a partially ordered set (a \emph{poset}).
}
\begin{proof}
A poset structure $(S,\preceq)$ over a set $S$ requires that $\preceq$ is a (\emph{reflexive}, \emph{transitive}, and \emph{antisymmetric}) order which may or
 may not be partial. To show that $(\mathcal{G},\mapsto)$, the embedding order defined over a given set of graphs $\mathcal{G}$, is a poset, we  examine the
three properties in turn.

\emph{Reflexive} $G\in \mathcal{G} \mapsto G \in \mathcal{G}$: By using the identity mapping $\pi: G=(V,E) \rightarrow G=(V,E)$ which embeds each node and link
to itself, the claim is proved.

\emph{Transitive} $A\in \mathcal{G} \mapsto B \in \mathcal{G}$ and $B\in \mathcal{G} \mapsto C \in \mathcal{G}$
implies $A\in \mathcal{G} \mapsto C \in \mathcal{G}$: Let $\pi_1$ denote the embedding function for $A\in \mathcal{G}
\mapsto B \in \mathcal{G}$ and let $\pi_2$ denote the embedding function for $B\in \mathcal{G} \mapsto C \in \mathcal{G}$,
 which must exist by our assumptions. We will show that then also a valid embedding function $\pi$ exists to map $A$ to $C$. Regarding the
 node mapping, we define $\pi_V$ as $\pi_V:=\pi_{1V} \circ \pi_{2V}$, i.e., the result of first mapping the nodes according to $\pi_{1V}$ and
 subsequently according to $\pi_{2V}$. We first show that $\pi_V$ is a valid mapping from $A$ to $C$ as well. First, $\forall v_A \in V_A$, $\pi(v_A)$
 maps $v_A$ to a single node in $V_C$, fulfilling the first condition of the embedding (see Definition~\ref{def:embedding}). Ignoring relay capacities
 (which is studied together with the conditions on the links below),
Condition~($ii$) of Definition~\ref{def:embedding} is also fulfilled since the mapping $\pi_{1V}$ ensures that no node in $V_B$ exceeds its capacity, and
can hence safely be mapped to $V_C$. Let us now turn our attention to the links. We use the following mapping $\pi_E$ for the edges. Note that $\pi_{1E}$
maps a single link $e$ to an entire (but possibly empty) path in $B$ and $\pi_{2E}$ then maps the corresponding links $e'$ in $B$ to a walk in $C$. We can
transform any of these walks into paths by removing cycles; this can only lower the resource costs. Since $\pi_{1E}$ maps to a subset of $E_B$ only and
since $\pi_{2E}$ can embed all edges of $B$, all link capacities are respected up to relay costs. However, note also that by the mapping $\pi_1$ and for
relay costs $\epsilon>0$, each node $v_B\in V_B$ can either not be used at all, be fully used as a single endpoint of a link $e_A\in E_A$, or serve as a
relay for one or more links. Since both end-nodes and relay nodes are mapped to separate nodes in $C$, capacities are respected as well. Conditions~($iii$)
 and ($iv$) hold as well.

\emph{Antisymmetric} $A\in \mathcal{G} \mapsto B \in \mathcal{G}$ and $B\in \mathcal{G} \mapsto A \in \mathcal{G}$ implies $A=B$, i.e., $A$ and
$B$ are isomorphic and have the same weights: First observe that if the two networks differ in size, i.e., $|V_A|\neq |V_B|$ or $|E_A|\neq |E_B|$,
then they cannot be embedded to each other: W.l.o.g., assume $|V_A|>|V_B|$, then since nodes of $V_A$ of cannot be split into multiple nodes of $V_B$
(cf Definition~\ref{def:embedding}), there exists a node $v_A\in V_A$ to which no node from $V_B$ is mapped. This however implies that node $\pi_1(v_A)\in V_B$
must have available capacities to host also $v_A$, contradicting our assumption that nodes cannot be split in the embedding. Similarly, if $|E_A|\neq |E_B|$, we
can obtain a contradiction with the single path argument. Thus, not only the total number of nodes and links in $A$ and $B$ must be equivalent but also the total
amount of node and link resources. So consider a valid embedding $\pi_1$ for $A\in \mathcal{G} \mapsto B \in \mathcal{G}$ and a valid embedding $\pi_2$ for $B\in
\mathcal{G} \mapsto A \in \mathcal{G}$, and assume $|V_A|=|V_B|$ and $|E_A|=|E_B|$. It holds that $\pi_1$ and $\pi_2$ define an isomorphism between $A$ and $B$:
Clearly, since $|V_A|=|V_B|$, $\pi_1$ and $\pi_2$ define a permutation on the vertices. W.l.o.g., consider any link $\{v_A,v_A'\}\in E_A$. Then, also $\{\pi_{1}(v_A),
\pi_{1}(v_A')\}\in E_B$: $|\{\pi_{1}(v_A),\pi_{1}(v_1')\}|=0$ would violate the node capacity constraints in $B$, and $|\{\pi_{1}(v_A),\pi_{1}(v_A')\}|>1$ requires
$|E_B|>|E_A|$.
\hfill$\Box$ \end{proof}

\textbf{Lemma~\ref{lem:dicoExistence}.} \emph{
There exists a dictionary $D=(V_D,E_D)$ that covers all member graphs $H$ of a motif graph family $\mathcal{H}$ with $n$ vertices.}
\begin{proof}
We present a procedure to construct such a dictionary $D$. Let $\mathcal{M}_n$
be the set of all motifs with $n$ nodes of the graph family $\mathcal{H}$. For
each motif $m\in\mathcal{M}_n$ with $x$ possible attachment point pairs (up to
isomorphisms), we add $x$ dictionary words to $V_D$, one for each attachment
point pair. The resulting set is denoted by $V_M$. For each sequence of $V_M^\star$ with at most $n$ nodes, we add another word to $V_D$ (with the un-used attachment points of the first and the last subword). There is an edge $e\in E_D$ if the transitive reduction of the embedding relation with context includes an edge between two words.
 We now prove that $D$ is a dictionary, i.e., it is robust to composition. Let $i \in  V_D$. Observe that $R_i$ contains all compositions of words with at most $n$ nodes in which $i$ can be embedded. Consequently, no matter which sequences are in $\overline{R}_i^\star$ it holds that $v_i$ cannot be embedded in a sequences in $Q_i$ the robustness condition is satisfied. Since $H$ has $n$ vertices, and since $D$ contains all possible motifs of at most $n$ vertices, $D$ covers $H$.
\hfill$\Box$ \end{proof}

Note that the proof of Lemma~\ref{lem:dicoExistence} only addresses the composition
robustness for sequences of up to $n$ nodes. However, it is clear that $|V(G)| >
|V(H)| \Rightarrow G \not \mapsto H$, and therefore no ``mismatch'' can happen to
happen. Finite dictionaries and with this adapted
composition can also be applied in the lemmata proved above, there is only a
small notational change necessary in the proof of
Lemma~\ref{lemma:dico-property-app}. (Note that it is always possible to determine
the number of nodes $n$ by binary search using $O(\log{n})$ requests.)

\end{document}